\def\N{{\mathcal{N}}}
\newtheorem{theorem}{Theorem}
\newtheorem{corollary}[theorem]{Corollary}
\newtheorem{proposition}[theorem]{Proposition}
\newtheorem{definition}[theorem]{Definition}
\newtheorem{lemma}[theorem]{Lemma}
\numberwithin{equation}{section}
\numberwithin{theorem}{section}
\begin{document}
\title[Causality and skies]{\textsc{Causality and skies: is non-refocussing necessary?}}
\author{A. Bautista, A. Ibort}
\address{ICMAT and Depto. de Matem\'aticas, Univ. Carlos III de Madrid, Avda. de la
Universidad 30, 28911 Legan\'es, Madrid, Spain.}
\email{abautist@math.uc3m.es, albertoi@math.uc3m.es}
\author{J. Lafuente}
\address{Depto. de Geometr\'{\i}a y Topolog\'{\i}a, Univ. Complutense de
Madrid, Avda. Complutense s/n, 28040 Madrid, Spain.}
\email{jlafuente@mat.ucm.es}
\date{}
\thanks{This work has been partially supported by the Spanish MICIN grant
MTM 2010-21186-C02-02 and QUITEMAD P2009 ESP-1594.}

\begin{abstract}   It is shown that if $M$ is a strongly causal free of naked singularities space-time, then its causal structure is completely characterized by a partial order in the space of skies defined by means of a class non-negative Legendrian isotopies. It is also proved that such partial order is determined by the class of future causal celestial curves, that is, curves in the space of light rays which are tangent to skies and such that they determine non-negative sky-Legendrian isotopies.

It will also be proved that the space of skies $\Sigma$ equipped with Low's (or reconstructive) topology is homeomorphic and diffeomorphic to $M$ under the only additional assumption that $M$ separates skies, that is, that different points determine different skies.   The sky-separating property of $M$ being weaker than the ``non-refocussing'' property encountered in the previous literature is sharp and the previous result provides the answer to the question of what is the class of space-times whose causal structure, topology and differentiable structure can be reconstructed from their spaces of light rays and skies.  

Finally, the previous results allow a formulation of Malament-Hawking theorem in terms of the partial order defined on the space of skies.
\end{abstract}

\maketitle
\tableofcontents

\section{Introduction}
In a recent paper by Bautista \emph{et al} \cite{Ba14} it was shown, following a suggestion by Low \cite{Lo01, Lo06},  that if $M$ is a strongly causal, free of naked singularities, non-refocussing space-time $M$, then the sky map is a diffeomorphism, that is, the topology and differentiable structure of $M$ can be recovered from the natural topology and differentiable structure on its space of skies $\Sigma$ and light rays $\mathcal{N}$.  Moreover if $M_1$ and $M_2$ are two strongly causal space-times (where $M_2$ is null non-conjugate), and $\phi \colon \mathcal{N}_1
 \to \mathcal{N}_2$ is a diffeomorphism mapping causal celestial curves into causal celestial curves then there exists a conformal immersion  $\Phi \colon M_1 \to M_2$ such that $\phi (\gamma) = \Phi \circ \gamma$ for any light ray $\gamma\in \mathcal{N}_1$ and conversely (\cite[Thm. 4]{Ba14}).    Let us recall that a celestial curve $\Gamma$ is a differentiable curve  in the space of light rays which is tangent everywhere to a sky and it is called past (future) causal if it defines a non-negative (non-positive) Legendrian isotopy of skies.  The class of causal celestial curves emerges thus as the relevant geometrical structure on $\mathcal{N}$ characterizing the original conformal class of the Lorentzian metric on $M$.  Moreover  the previous conditions cannot be weakened as the examples discussed in \cite{Ba14} show.

Let us recall that strongly causal space-times are natural candidates for a reconstruction theorem because they constitute a class of space-times whose spaces of unparametrized null geodesics are smooth manifolds (see for instance \cite[Prop. 2.1. and ff.]{Lo89}).     We will assume in what follows that the space-time $M$ is strongly causal, hence time-orientable, and free of naked singularities (the later condition guaranteeing that its space of light rays is a Hausdorff space).     It is important to recall that the space of light rays carries a canonical contact structure that provides an additional piece of geometry relevant in the analysis that follows.

Each event $x \in M$ determines the congruence of light rays $S(x)$ passing through it and called the sky of $x$.   There is a natural map from $M$ into the set of skies $\Sigma$ called the sky map $S\colon M \to \Sigma$, $x \mapsto S(x)$.    The reconstruction theorem for the topological (respect., differentiable) structure of $M$ will consists in determining under what conditions the sky map $S$ is a homeomorphism (respect., a diffeomorphism), that is, under what conditions the space of skies $\Sigma$ inherits a natural topology  (respect., a smooth manifold structure) from $\mathcal{N}$ such that the sky map is a homeomorphism (a diffeomorphism).  

 
Simple examples, like the Einstein cylinder $\mathbb{R} \times \mathbb{S}^{m-1}$ equipped with the standard product metric $\mathbf{g}=-dt^2\oplus \mathbf{h}$, where $\mathbf{h}$ is the induced Euclidean metric on $\mathbb{S}^{m-1}$, show that even globally hyperbolic spaces could have many-to-one sky maps.   Thus a natural condition that has to be imposed on $M$ is that the sky map is injective or, in other words, that skies separate events, i.e., given two different events $x\neq y$ their skies are different $S(x) \neq S(y)$.  Thus for a strongly causal space-time with the property that skies separate events, the sky map $S$ is invertible with inverse $P$ (called the `parachute map').  Now equipping $\Sigma$ with the natural topology induced from $\mathcal{N}$ it is easy to show that $S$ is continuous.   

In order to guarantee that the sky map is open Low introduced in \cite{Lo93}, \cite{Lo01} and \cite{Lo06} an apparently weaker property called \emph{non-refocusing}: a space-time $M$ is refocusing at $x \in M$ if there exists an open neighbourhood $U$ of $x$ such that for every open neighbourhood $V\subset U$ of $x$ there exists $y\notin U$ such that every null geodesic through $y$ enters $V$.  This property has been studied in depth in \cite{K} and plays an important role in the proofs given in \cite[Prop. 3]{Ba14} and \cite[Prop. 4.1]{K}, that the sky map $S \colon M\rightarrow \Sigma$ is a homeomorphism.  We claim that the hypothesis of non-refocusing is not necessary if the space-time $M$ is sky-separating, i.e., any space-time with skies separating events is homeomorphic and diffeomorphic to its space of skies.  This conclusion will be reached as a consequence of Thm. \ref{theo-regular} at the end of Section \ref{non-refocussing}.

Moreover, and this will constitute the first objective of this project, it is expected that not only the topological structure, but the causal structure of $M$ could be characterized in terms of the topological structure of $\N$ and $\Sigma$ too.   Again, it was conjectured by R. Low that two events in a space-time are causally related iff their corresponding skies, which are Legendrian knots with respect to the canonical contact structure in the space of null geodesics, are
linked.  Recently it was shown by Chernov and Rudyak \cite{Ch08} and Chernov and Nemirovski \cite{Ch10} that Low's conjecture is actually true in a globally hyperbolic space with a Cauchy surface whose
universal covering is diffeomorphic to an open domain in $\mathbb{R}^{n}$.
A fundamental role in such analysis is played by the partial relation defined by non-negative Legendrian isotopies. Recently Chernov and Nemirovski \cite{Ch14} had extended the previous ideas to show that the causal structure of a simply connected globally hyperbolic space-time $M$ can be reconstructed from the partial ordering in the universal covering of Legendrian isotopy class of the fibres of the sphere bundle of a smooth Cauchy surface.

In Sect. \ref{partial_order_skies} it will be shown that the causal structure of $M$ can be recovered from a partial ordering introduced in the space of skies by a restricted class of non-negative Legendrian isotopies called sky isotopies.
Without entering in the analysis of Low's Legendrian conjecture here, it will be shown that the analysis of the causal structure of $M$ in terms of $\Sigma$ is deeply related to the study of celestial curves.   It will be shown that celestial curves are in correspondence with a class of null curves that will be called twisted null curves.   The
causal structure of the original space-time will be characterized completely at the end of Sect. \ref{causal} in terms of the partial order relation induced in the space of skies by future (past) causal twisted null  curves.

Finally, the proof that the space of skies of a sky-separating space-time is homeomorphic (and diffeomorphic)
to the original space proceeds by constructing a basis for the reconstructive (or Low's) topology by means of 
regular open subsets of $\Sigma$, where `regular' here means that the corresponding
tangent spaces to the skies elements of the open set `pile up' nicely
defining a regular submanifold in the tangent space to $\mathcal{N}$.   The definition and discussion of the main properties of regular sets constitutes the core of Section \ref{non-refocussing}, where again the properties of twisted causal null curves
will be used in a critical way.

Thus we offer an answer to the question of characterizing a large class of space-times $M$ such that the pair $(\mathcal{N}, \Sigma)$ is capable of reconstructing the causal, topological and differentiable structures of $M$.   However the question of what is the largest class of space-times such that two different skies which are related by a future causal celestial curve are topologically linked as stated in Low's conjecture is still open.


\section{The space of light rays and the space of skies}

\subsection{The space of light rays}
Let $M$ be a second countable paracompact $m$-dimensional smooth manifold and $\mathcal{C}$ a conformal class of Lorentzian metrics of signature $(-+\cdots +)$ such that $M$ becomes a time-orientable strongly causal space-time.   We will denote by $\mathbf{g}$ a representative metric on $\mathcal{C}$ and a time-like vector field $T$ determining a time-orientation on $M$ will be fixed in what follows.     

Let $\mathcal{N}$ denote the space of unparametrized inextensible future-oriented null geodesics, called in what follows light rays, i.e., $\mathcal{N}$ is the space of equivalence classes of inextensible smooth null curves $\gamma \colon I \to M$, with $I$ an interval in $\mathbb{R}$, such that $\nabla_{\gamma'} \gamma' = 0$, $g(\gamma', T) < 0$, and two such curves are equivalent if they are related by an affine reparametrization for the chosen representative $\mathbf{g}$ of the conformal class $\mathcal{C}$.     

We will consider in what follows the fibre bundle $\mathbb{N}$ over $M$ consisting of nonzero null vectors, and the corresponding components of future (past) null vectors $\mathbb{N}^\pm$.  If we denote $\mathbb{N}_x^+ = \{ v \in \mathbb{N}_x \mid v \neq 0, g_x(v, T(x)) < 0 \}$ and   $\mathbb{N}_x^- = \{ v \in \mathbb{N}_x \mid v \neq 0, g_x(v, T(x)) > 0 \}$, we have $\mathbb{N}^\pm = \bigcup_{x \in M} \mathbb{N}^\pm_x$ and $\mathbb{N} = \mathbb{N}^+ \cup \mathbb{N}^-$.  We will denote by $\pi \colon \mathbb{N} \to M$ the restriction of the canonical tangent bundle projection $TM \to M$ to $\mathbb{N}$ (and $\mathbb{N}^\pm$).     

We will denote again the canonical projection $\pi \colon \mathbb{PN}^+ \to M$, where $\mathbb{PN}^+$ denotes the quotient space of $\mathbb{N}^+$ by the action of the multiplicative group of positive real numbers $\mathbb{R}^+$ by scalar multiplication.    Notice that there is a canonical surjection $\sigma \colon \mathbb{PN}^+ \to \mathcal{N}$, given by $\sigma ([u]) = \gamma_{[u]}$, where $\gamma_{[u]}$ (or $[\gamma_u]$ as it will be used in what follows too) denotes the unparametrized geodesic containing $\gamma_u$ and $\gamma_u (t)$ indicates the unique future parametrized geodesic such that $\gamma_u (0) = \pi (u)$, and $\gamma'_u (0) = u$.   Moreover, because $\gamma_{\lambda u}(t) = \gamma_u (\lambda t)$, $u \in \mathbb{N}^+$ and the previous notation is consistent.

\subsection{The smooth structure of $\mathcal{N}$}\label{atlas}
The space of light rays $\mathcal{N}$ can be equipped with the structure of a second countable paracompact smooth manifold of dimension $2m-3$, if $\dim M = m$, and such that the map $\sigma$ becomes a submersion, in two different ways.   We will succinctly describe them in the following paragraphs.  

First,  we can use the local structure of $M$, i.e., because $M$ is strongly causal, given any event $x\in M$, there exists a globally hyperbolic neighbourhood $U_x$ of $x$ and a local smooth Cauchy hypersurface $C_x \in U_x$ \cite{Mi08}.  We can take $U_x$ small enough such that it is contained in a local chart of $M$.  Hence we can define an atlas for $\mathcal{N}$ as follows, select for any event $x\in M$ a globally hyperbolic open neighbourhood $U_x$ as before with Cauchy hypersurface $C_x $.   We consider the restriction of the projective bundle $\mathbb{PN}^+$ to $C_x$ and we denote it by $\mathbb{PN}^+(C_x)$.   There is a natural embedding $i_x \colon \mathbb{PN}^+(C_x) \to \mathbb{PN}^+$.  Then the composition $\sigma \circ i_x \colon \mathbb{PN}^+(C_x) \to \mathcal{N}$ will provide the charts of the atlas we are looking for and the open sets $\mathcal{U}_x = \sigma \circ i_x ( \mathbb{PN}^+(C_x)) \subset \mathcal{N}$ will be the domains of the corresponding charts (see \cite[Sect. 2.3]{Ba14} for more details).

Alternatively, we can induce a smooth structure on $\mathcal{N}$ from the smooth structure of the bundle $\mathbb{N}^+$ by considering the foliation defined by the leaves of the integrable distribution generated by the vector fields $X_\mathbf{g}$ and $\Delta$, where $X_\mathbf{g}$ denotes the geodesic spray of a fixed representative metric in the conformal class $\mathcal{C}$ and $\Delta$ is the dilation or Euler field.    Because $[X_\mathbf{g}, \Delta] = X_\mathbf{g}$, the distribution $D = \mathrm{span} \{ \Delta, X_g\}$ is integrable and denoting by $\mathcal{D}$ the corresponding foliation, we have that the space of leaves $\mathbb{N}^+ / \mathcal{D} \cong \mathcal{N}$.    If $M$ is strongly causal it can be shown that $\mathcal{D}$ is a regular foliation and the space of leaves inherits a smooth structure from $\mathbb{N}^+$.   Again, it is not hard to show that both smooth structures coincide.

\subsection{The tangent bundle $T\mathcal{N}$ and the contact structure of $\mathcal{N}$}

Let $\Gamma \colon (-\epsilon, \epsilon) \to \mathcal{N}$ be a differentiable curve such that $\Gamma (0) = \gamma$ and let $\chi (s,t) \colon (-\epsilon, \epsilon) \times I \to M$ be a geodesic variation by null geodesics of a parametrization $\gamma (t)$ of the null geodesic $\gamma$, that is, $\chi$ is a smooth function such that $\chi (s,t) = \gamma_s (t)$ are null geodesics, $\gamma_0 (t)$ is a parametrization of $\gamma$, and $[\gamma_s] = \Gamma (s)$ where $[\gamma_s]$ denotes the unparametrized geodesic containing $\gamma_s$.  Then the vector field along $\gamma$ defined by $J = \partial \chi /\partial s \mid_{s = 0}$ is a Jacobi field.  The set of Jacobi fields along $\gamma (t)$ will be denoted by $\mathcal{J} (\gamma )$  and they satisfy the second order differential equation:
$$
J'' = R(\gamma', J) \gamma' \, ,
$$
where $J'$ denotes the covariant derivative of $J$ along $\gamma'(t)$.  Notice that since the geodesic variation $\chi$ is by null geodesics, we have $\langle J, \gamma' \rangle =$ constant and we denote by $\mathcal{L}(\gamma)$ the linear space of Jacobi fields satisfying this property.

Equivalence classes of curves $\Gamma (s)$ possessing a first order contact define tangent vectors to $\mathcal{N}$ at $\gamma$, hence tangent vectors at $\gamma$ correspond to equivalence classes of Jacobi fields with respect to the equivalence relation defined by reparametrization of the geodesic variation $\chi$.   Such reparametrizations will correspond to Jacobi fields of the form $(at + b) \gamma'(t)$, then there is a canonical projection $\mathcal{L}(\gamma) \to T_\gamma \mathcal{N}$, mapping each Jacobi field $J$ into a tangent vector $[J] =J \,\, \mathrm{mod} \gamma'$ whose kernel is given by Jacobi fields proportional to $\gamma'$.  In what follows the tangent vectors $[J]$ will be denoted again as $J$ unless there is risk of confusion.  

There is a canonical contact structure on $\mathcal{N}$ defined by the maximally non-integrable hyperplane distribution 
$\mathcal{H}_\gamma \subset T_\gamma \mathcal{N}$ formed by the vectors orthogonal to their supporting light ray, i.e., 
\begin{equation}\label{contact}
\mathcal{H}_\gamma = \{ J \in T_\gamma \mathcal{N} \mid \langle J, \gamma' \rangle = 0  \} \, .
\end{equation}
It is easy to show that $\mathcal{H}$ does not depend on the representative metric used to define, the representative $J$ chosen for the tangent vector, or the parametrization $\gamma(t)$ we chose for the light ray $\gamma$.    

Let us recall that if $X$ is a contact manifold with contact distribution a maximally non-integrable codimension one distribution $\mathcal{H}$, the contact structure is said to be exact or co-orientable if there exists a globally defined 1-form $\alpha$,  such that $\mathcal{H} = \ker \alpha$ and such 1-form is called a contact 1-form for the contact structure $\mathcal{H}$.  

It is obvious that the canonical contact structure $\mathcal{H}$ on $\mathcal{N}$, Eq. (\ref{contact}), can be locally defined by the family of 1-forms $\alpha^{x}$ defined on the open sets $\mathcal{U}_x$ of the atlas described in Sect. \ref{atlas} above and given by the explicit formula:
$$
\alpha_\gamma^{x} \colon J \mapsto \langle J, \gamma' \rangle \, ,
$$ 
where the parametrization $\gamma (t)$ of the light ray $\gamma$ is determined by the Cauchy surface $C_x \subset U_x$ and $\langle T(x), \gamma'(0) \rangle = -1$.  The local 1-forms $\alpha^{x}$ do not define a global 1-form, however because $\mathcal{N}$ is paracompact we can use a partition of the unity subordinated to a locally finite refinement of the open covering $\{ \mathcal{U}_x \}$ of $\mathcal{N}$ defined by family of globally hyperbolic open neighbourhoods $\{U_x \mid x \in M\}$, and paste the local 1-forms to define a globally defined 1-form whose kernel is $\mathcal{H}$.  

Notice however the space of not oriented unparametrized null geodesics still carries a canonical contact structure (defined by the same formula above, Eq. (\ref{contact})s ) which is not co-oriented.



\section{Reconstruction of the causal structure}\label{causal}

\subsection{The space of skies and its topology}

As it was explained in the introduction, the sky of an event is the congruence of light rays passing through it. Thus if $x\in M$ denotes an event, the corresponding sky will be denoted either by $S(x)$ or $X$.  Then $S(x) = \{ \gamma \in \mathcal{N} \mid x \in \gamma \}$.  Notice that there is a canonical map $\sigma_x \colon \mathbb{PN}^+_x \to S(x)$, $\sigma_x ([u]) = \gamma_{[u]}$.   Clearly the sky $S(x)$ as a submanifold of $\mathcal{N}$ is diffeomorphic to the sphere of dimension $m-2$.  The family of all skies will be denoted by $\Sigma$, that is,
$$
\Sigma = \{ X = S(x) \mid x \in M \} \, ,
$$
and the canonical map $S \colon M\to \Sigma$, $x \mapsto S(x)$, is called the sky map.   The sky map is clearly surjective, however it doesn't have to be injective as indicated in the introduction.  Hence we will say that $M$ separates skies if $S$ is injective, that is, if $x \neq y$, then $S(x) \neq S(y)$.  If $M$ separates skies, the map $P \colon \Sigma \to M$, inverse to the sky map, is well defined and will be called the parachute map.

The space of skies $\Sigma$ carries a canonical topology called the reconstructive topology defined as follows.  Let $	\mathcal{U} \subset \mathcal{N}$ be an open set, then consider the set of all skies $X$ such that $X \subset \mathcal{U}$.  We will denote this set by $\Sigma (\mathcal{U})$.  It is clear that the family of sets $\Sigma (\mathcal{U})$ satisfies $\Sigma (\mathcal{U}) \cap \Sigma (\mathcal{V}) = \Sigma (\mathcal{U} \cap \mathcal{V})$, then they constitute a basis for a topology on $\Sigma$ called the reconstructive topology. 

It is easy to prove that the sky map $S$ is continuous with respect to the reconstructive topology.  However it is not obvious if it is open or not.  As it was discussed in the introduction it is one of the objectives of this paper to determine under what conditions $S$ is open, i.e., $P$ continuous, or not.

We will end these remarks by observing that if $X = S(x)$ is a sky, then given $\gamma \in X$, a tangent vector $J$ to $X$ at $\gamma$ is determined by a geodesic variation such that all their geodesics pass through the point $x$ at time 0, then $J(0) = 0$.  This implies that $\langle J, \gamma' \rangle = 0$ for all $J \in T_\gamma X$ and $TX \subset \mathcal{H}$.  Thus skies are Legendrian spheres because, in addition, $2m- 4 = \dim \mathcal{H}_\gamma = 2 \dim T_\gamma X = 2(m-2)$.

\subsection{The partial order in the space of skies}

The canonical contact structure on $\mathcal{N}$ allows to define a natural partial ordering in the space of skies.  

Let us recall first that if $X$ is a co-oriented contact manifold with contact distribution $\mathcal{H} = \ker \alpha$ where $\alpha$ is a contact 1-form, a differentiable family $\Lambda_s$, $s \in [0,1]$, of diffeomorphic Legendrian submanifolds is called a Legendrian isotopy.  It is always possible to describe a Legendrian isotopy via a parametrization $F \colon \Lambda_0 \times [0,1] \to X$ verifying $F(\Lambda_0, s) = \Lambda_s \subset X$.   The map $F_s \colon \Lambda_0 \to \Lambda_s$, given by $F_s(\lambda) = F(\lambda, s)$ is a diffeomorphism for all $s \in [0,1]$.  
 The Legendrian isotopy $\Lambda_s$ is said to be non-negative (non-positive) if $(F^*\alpha)(\partial / \partial s) \geq 0$ (respect. $(F^*\alpha)(\partial / \partial s) \leq 0$) with $F$ a parametrization of $\Lambda_s$.   It is easy to check that the previous definition does not depend on the chosen parametrization.
 
 If we consider now the class $\mathcal{S}$ of Legendrian spheres on the contact manifold $X$, we can define a partial order on $\mathcal{S}$ by saying that $S_0 \prec S_1$, $S_0,S_1 \in \mathcal{S}$, if there exists a non-negative Legendrian isotopy $F \colon S_0 \times [0,1] \to X$, joining $S_0$ and $S_1$, i.e., such that $F_0(S_0 ) = S_0$, $F_1(S_0 ) = S_1$.  
 
 We will consider the previous ideas in the contact manifold $\mathcal{N}$ of light rays of a given space-time $M$.   The class $\mathcal{S}$ of Legendrian spheres in $M$ contains the space of skies $\Sigma$.   Then the partial order $\prec$ described before induces a partial order in $\Sigma$.  However we would like to restrict the previous partial order because it could happen that two skies $X_0 = S(x_0)$ and $X_1 = S(x_1)$ would be related, $X_0 \prec X_1$, but the non-negative Legendrian isotopy $X_s$ joining $X_0$ and $X_1$ will fall out of $\Sigma$, that is, not all Legendrian spheres $X_s$ will be the sky of a point $x_s \in M$.  
 
 Thus we will weaken the partial order $\prec$ by restricting the class of Legendrian isotopies to those consisting of skies.
 Hence let $F\colon X_0 \times [0,1] \to \mathcal{N}$ be a Legendrian isotopy such that $X_s = F_s(X_0)$ is the sky of $x_s\in M$, i.e., $X_s = S(x_s)$ and it defines a differentiable curve $\mu \colon [0,1] \to M$, given by $\mu(s) = x_s$.  Conversely,
 let $x_0 \in M$ be an event and $X_0 = S(x_0)$ its sky which is a Legendrian sphere,
then any differentiable curve $\mu \colon [0,1] \to M$ with $\mu (0) = x_0$ defines a Legendrian isotopy parametrized by the function $F^\mu \colon X_0 \times [0,1] \to \mathcal{N}$ given by $F^\mu (\gamma_{[u]}, s) = \gamma_{[u_s]}$, and $u_s \in \mathbb{N}_{\mu(s)}^+$ is the parallel transport of $u \in \mathbb{N}_{x_0}^+$ along $\mu$.  Notice that then $F^\mu$ is a Legendrian isotopy of skies and $F_s(X_0) = S(\mu (s))$, $s \in [0,1]$.  

We will call the Legendrian isotopies consisting of skies, sky isotopies and the corresponding partial order in the space of skies will be denoted by $\prec_\Sigma$. 

On the other hand there is a natural partial order relation in $M$ defined by the conformal class of the Lorentzian metric.  Thus given two events $x,y\in M$, we say that $y$ is in the causal future of $x$ and it will be denoted by $x \prec y$, if  $y\in J^{+}\left(x\right)$, i.e., $y$ can be reached by a future oriented causal curve starting at $x$.

Now it is simple to show that the curve $\mu \colon [0,1] \to M$ is causal past (future) iff $F^\mu$ is a non-negative (respect. non-positive) sky isotopy.   Hence we have the following characterization of causality in terms of definite sky isotopies (\cite[Prop. 4]{Ba14}).
 
\begin{proposition}
$x \prec y$ iff $X \prec_\Sigma  Y$.
 \end{proposition}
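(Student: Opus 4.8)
The plan is to deduce the Proposition from the equivalence recorded just before it — that a curve $\mu$ in $M$ is causal (of a given time orientation) precisely when the associated sky isotopy $F^\mu$ is a definite (non-negative or non-positive) sky isotopy — together with two ingredients: that the causal relation $x\prec y$ is always realised by a \emph{smooth} causal curve, and that conversely every sky isotopy carries a well-defined differentiable base curve in $M$ whose velocity controls the sign of the isotopy. I will let the time orientation enter only through the quoted equivalence, so that beyond what is already in the excerpt no sign-chasing is required.

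For the forward implication $x\prec y\Rightarrow X\prec_\Sigma Y$ I would argue as follows. If $x=y$ the constant isotopy witnesses $X\prec_\Sigma X$. If $x\neq y$ and $y\in J^+(x)$, I produce a smooth future-directed causal curve $\mu\colon[0,1]\to M$ from $x$ to $y$: when $y\in I^+(x)$ a piecewise-timelike curve can be smoothed to a smooth timelike one, timelikeness being an open condition; when $y\in J^+(x)\setminus I^+(x)$, $M$ being strongly causal, the standard causality argument furnishes a null geodesic segment from $x$ to $y$, which is already smooth. Feeding $\mu$ into the construction $F^\mu$ and invoking the quoted equivalence (with the sign convention for Legendrian isotopies fixed above) yields a non-negative sky isotopy joining $X=S(x)$ to $Y=S(y)$, so $X\prec_\Sigma Y$. (If one prefers to begin with a merely piecewise-smooth causal curve, one concatenates the isotopies $F^\mu$ built on its smooth pieces and rounds the corners by a monotone reparametrisation in $s$; this multiplies $(F^*\alpha)(\partial/\partial s)$ by a non-negative function, hence preserves the sign, and transitivity of $\prec_\Sigma$ then gives the conclusion.)

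For the converse $X\prec_\Sigma Y\Rightarrow x\prec y$, take a non-negative sky isotopy $F\colon X\times[0,1]\to\mathcal N$ with $F_0(X)=X$ and $F_1(X)=Y$. By the definition of a sky isotopy $F_s(X)=S(x_s)$ for a differentiable base curve $\mu(s)=x_s$ with $\mu(0)=x$ and $\mu(1)=y$ (when $M$ separates skies the $x_s$ are unambiguous; in general one picks a differentiable lift of the $\Sigma$-valued curve starting at $x$). The crux is the following. Fix $\gamma\in X$, write $\gamma_s=F_s(\gamma)$ parametrised so that $\gamma_s(0)=x_s$ for every $s$, and represent $\partial F/\partial s$ at $(\gamma,s)$ by a Jacobi field $J$ along $\gamma_s$; then $J(0)=\mu'(s)$, and since the variation is by null geodesics $\langle J,\gamma_s'\rangle$ is constant in $t$, so up to a positive factor (coming from the normalisation and patching of $\alpha$) the value $(F^*\alpha)(\partial/\partial s)\big|_{(\gamma,s)}$ has the sign of $\langle\mu'(s),\gamma_s'(0)\rangle$. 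This depends on $F$ only through the base velocity $\mu'(s)$, not on how $F$ reshuffles light rays within the skies; and as $\gamma$ ranges over $X=S(x)$ the vector $\gamma_s'(0)$ ranges over all future null directions at $x_s$. Hence non-negativity of $F$ is exactly the statement that $\langle\mu'(s),v\rangle\geq 0$ for every future null $v\in T_{x_s}M$, which in the signature $(-+\cdots+)$ forces $\mu'(s)$ to lie in a causal cone; so $\mu$ is a causal curve, its endpoints $x=\mu(0)$ and $y=\mu(1)$ are causally related, and the orientation convention fixed in the forward implication pins this down as $x\prec y$.

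The main obstacle is the converse direction, and within it the two regularity/identification points: that the base curve $\mu$ of a sky isotopy is genuinely differentiable — asserted in the discussion preceding the Proposition and ultimately resting on the local models for $\mathcal N$ and for skies from Section \ref{atlas} — and, more substantively, the identification of the sign of $F^*\alpha$ with the causal character of $\mu'(s)$ alone, which is what decouples the motion of the base point from the internal reshuffling of light rays inside each sky. The remaining steps are either the quoted equivalence or standard causality theory: the existence of a smooth causal curve realising $y\in J^+(x)$, and the fact that concatenating and smoothing Legendrian isotopies preserves non-negativity.
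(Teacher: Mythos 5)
Your proof is correct and follows essentially the same route as the paper, which simply invokes the equivalence ``$\mu$ is causal past (future) iff $F^\mu$ is a non-negative (non-positive) sky isotopy'' and defers the details to \cite[Prop. 4]{Ba14}; your computation that $(F^*\alpha)(\partial/\partial s)$ reduces, up to a positive factor, to $\langle \mu'(s),\gamma_s'(0)\rangle$ with $\gamma_s'(0)$ sweeping all future null directions at $\mu(s)$ is precisely the content of that equivalence, and the existence of a (piecewise) smooth causal curve realising $y\in J^+(x)$ is standard. The only loose end is the orientation bookkeeping (the paper pairs \emph{future} causal curves with \emph{non-positive} isotopies while defining $\prec_\Sigma$ via non-negative ones), an ambiguity present in the source itself, which you have reasonably resolved by fixing the convention once and for all.
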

 
 The previous observations and results lead naturally to the following:
 
 \begin{definition}
 A continuous curve $\chiÊ\colon [0,1] \to \Sigma$ will be causal past (future) if it defines a non-negative (respect. non-positive) Legendrian isotopy in $\mathcal{N}$.   Two skies $X,Y \in \Sigma$ are said to be past (future) causally related if there is a causal past (future) curve $\chi$ such that $\chi (0) = X$ and $\chi(1)= Y$, and it will be denoted by $X \prec_{c} Y$ ($Y \prec_{c} X$).
 \end{definition}
 
As a consequence of the ``Twisted Curve Theorem'', Thm. \ref{mu-teorema}, the ``$\mu$-Lemma'', Lemma \ref{mu-lemma}, and Cor. \ref{differ} below it follows that the space-time $M$ is diffeomorphic and order isomorphic to the space of skies $\Sigma$ equipped with the partial order $\prec_c$ and the natural differentiable structure induced from the space of light rays $\mathcal{N}$. 

\begin{corollary}\label{order_iso}  Let $M$ be a strongly causal free of naked singularities and sky-separating space-time, then $M$ is diffeomorphic and order isomorphic to its space of skies $\Sigma$ 
\end{corollary}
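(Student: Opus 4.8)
The plan is to assemble Corollary \ref{order_iso} from the three advertised ingredients: the Twisted Curve Theorem (Thm. \ref{mu-teorema}), the $\mu$-Lemma (Lemma \ref{mu-lemma}), and Cor. \ref{differ}. The diffeomorphism part is handled by Cor. \ref{differ} (equivalently, by Thm. \ref{theo-regular} referenced in the introduction): under the sky-separating hypothesis, the sky map $S\colon M\to\Sigma$ is a bijection, it is continuous for the reconstructive topology by the discussion in Section~\ref{causal}, and the content of the reconstruction theorem is that $S$ is also open, hence a homeomorphism, and in fact $S$ carries the differentiable structure of $M$ to the smooth structure that $\Sigma$ inherits from $\mathcal{N}$. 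So what remains to prove here is precisely the \emph{order} isomorphism: $S$ and its inverse $P$ must intertwine the causal order $\prec$ on $M$ with the order $\prec_c$ on $\Sigma$ defined by future/past causal curves in $\Sigma$.

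For the order statement I would argue both implications. The easy direction is $x\prec y\ \Rightarrow\ X\prec_c Y$: if $y\in J^+(x)$, pick a future causal curve $\mu\colon[0,1]\to M$ from $x$ to $y$, form the associated sky isotopy $F^\mu$ as constructed in Section~\ref{causal}, which by the observation quoted there (\cite[Prop. 4]{Ba14}) is a non-positive sky isotopy; projecting it to $\Sigma$ gives a future causal curve $\chi(s)=S(\mu(s))$ with $\chi(0)=X$, $\chi(1)=Y$, so $X\prec_c Y$ (and symmetrically $X\prec_\Sigma Y\Rightarrow X\prec_c Y$ is immediate since sky isotopies are in particular Legendrian isotopies in $\mathcal{N}$). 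The converse $X\prec_c Y\ \Rightarrow\ x\prec y$ is where the Twisted Curve Theorem enters: given a non-negative (resp. non-positive) Legendrian isotopy $\chi$ in $\mathcal{N}$ joining $X$ to $Y$, one has a priori no reason for the intermediate Legendrian spheres $\chi(s)$ to be skies, so one cannot directly read off a causal curve $\mu$ in $M$. The role of Thm. \ref{mu-teorema} is exactly to upgrade such a causal curve of skies to a genuine (twisted null, hence causal) curve downstairs, or to show that a causal curve in $\Sigma$ connecting two skies forces those endpoints to be causally related in $M$; combined with the $\mu$-Lemma, which presumably establishes that the curve $\mu(s)=P(\chi(s))$ is differentiable and that $F^{\mu}$ recovers the given isotopy up to the appropriate equivalence, one concludes that $\mu$ is a past (resp. future) directed causal curve from $x$ to $y$, i.e. $x\prec y$. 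Finally Proposition~2.? (the unnumbered \emph{$x\prec y$ iff $X\prec_\Sigma Y$}) shows $\prec_c$ and $\prec_\Sigma$ agree on $\Sigma$, so no generality is lost by working with $\prec_c$.

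Putting the pieces together: $S$ is a diffeomorphism by Cor. \ref{differ}; the easy direction above gives $x\prec y\Rightarrow S(x)\prec_c S(y)$; the Twisted Curve Theorem plus the $\mu$-Lemma give the converse; hence $S$ is an order isomorphism $(M,\prec)\to(\Sigma,\prec_c)$ and $P=S^{-1}$ is one too. I expect the genuine obstacle to be entirely concentrated in the converse implication, and more precisely in the step where a non-negative Legendrian isotopy of skies that a priori passes through non-sky Legendrian spheres must be replaced, or shown equivalent, to one staying within $\Sigma$ — this is what the Twisted Curve Theorem must do, and it is the technical heart on which the whole corollary rests; the diffeomorphism and the easy order direction are comparatively routine once Thm. \ref{theo-regular}, Thm. \ref{mu-teorema} and Lemma \ref{mu-lemma} are in hand.
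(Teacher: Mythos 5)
Your overall decomposition --- diffeomorphism from Cor.~\ref{differ}, order isomorphism from the two implications $x\prec y\Leftrightarrow X\prec_c Y$ --- matches the paper's intent, and your easy direction is fine. But your account of the converse implication, which you single out as the technical heart, rests on two misreadings. First, the obstacle you describe (a non-negative Legendrian isotopy whose intermediate Legendrian spheres need not be skies) does not arise for $\prec_c$: by the paper's definition a causal curve is a continuous curve $\chi\colon[0,1]\to\Sigma$, so every intermediate sphere $\chi(s)$ is a sky by hypothesis and the isotopy is automatically a sky isotopy. The passage from general Legendrian isotopies to sky isotopies is handled by fiat in the definitions of $\prec_\Sigma$ and $\prec_c$ (this is precisely why the paper introduces the restricted order), not by the Twisted Curve Theorem. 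Second, Thm.~\ref{mu-teorema} points in the opposite direction from the one you assign it: it takes $q\in I^+(p)$ as input and produces a future piecewise twisted null curve as output; it cannot ``upgrade a causal curve of skies to a causal curve downstairs.''

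With those corrections the converse $X\prec_c Y\Rightarrow x\prec y$ is in fact the routine half: sky-separation makes $\mu=P\circ\chi$ a well-defined curve in $M$ tracing the given sky isotopy, and the equivalence quoted before the Proposition (``$\mu$ is causal past (future) iff $F^\mu$ is a non-negative (non-positive) sky isotopy,'' i.e.\ \cite[Prop.~4]{Ba14}), together with the parametrization-independence of non-negativity, yields that $\mu$ is causal, hence $x\prec y$. The actual reason the paper cites Thm.~\ref{mu-teorema} and Lemma~\ref{mu-lemma} alongside Cor.~\ref{differ} is that they feed into Cor.~\ref{differ} itself: Lemma~\ref{mu-lemma} is used in Prop.~\ref{prop-regular} and Lemma~\ref{W-lemma}, and Thm.~\ref{mu-teorema} is the key step in Thm.~\ref{theo-regular} showing that $S^{-1}(W)$ is open for $W$ regular, which is what makes $S$ a homeomorphism (and then a diffeomorphism) without non-refocussing; they also underlie the finer statement that $\prec_c$ is generated by causal celestial curves. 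So your assembly goes through once you swap the roles you assigned: the twisted-curve machinery carries the diffeomorphism half (and the celestial refinement of the order), while the order isomorphism itself reduces to the displayed Proposition.
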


\subsection{Celestial curves and twisted null curves}

As stated in the introduction, the reconstruction theorem in \cite{Ba14} asserts that the conformal structure of $M$ is captured by the class of causal celestial curves, that is by curves in $\mathcal{N}$ that are everywhere tangent to skies.
More formally:

\begin{definition}\label{celestial}
A non-zero tangent vector $J \in \widehat{T}_\gamma\mathcal{N}$, (with $\widehat{T}_\gamma\mathcal{N} = T_\gamma\mathcal{N} - \{\mathbf{0} \}$), will be called a celestial vector if there exists a sky $S\in \Sigma$ such that $J \in \widehat{T}_\gamma S$.    A differentiable curve $\Gamma \colon I \to \mathcal{N}$ is called a celestial curve if $\Gamma'(s)$ is a celestial vector for all $s \in I$.
\end{definition}
 
We will analyze in this section the relation existing between celestial curves and the causality properties of $M$ and of the space of skies $\Sigma$.   To do that we will introduce first the notion of twisted causal null curve that will prove to be useful in the arguments to follow.

\begin{definition}\label{definition1} 
A continuous curve $\mu:\left[a,b\right]\rightarrow M$ will be called a \emph{piecewise twisted null curve} if there exists a partition $a=s_0 < s_1 <\ldots < s_k = b$ such that for every $i=1,\ldots,k$:
\begin{enumerate}
\item[i.] $\left. \mu \right|_{\left(s_{i-1},s_i\right)}$ is differentiable.
\item[ii.] $\mathbf{g}\left(\mu'\left(s\right),\mu'\left(s\right)\right)=0$ for all $s\in \left(s_{i-1},s_i\right)$.
\item[iii.] $\mu'\left(s\right)$ and $\frac{D\mu'}{ds}\left(s\right)$ are linearly independent for all $s\in \left(s_{i-1},s_i\right)$.
\end{enumerate}

We say that $\mu$ is causal if $\mu\mid_{(s_{i-1},s_i)}$ is causal future (respect. causal past) for all $i = 1,\ldots,k$.
If $k=1$ then $\mu$ will be simply called \emph{twisted null curve}.
\end{definition}

Now it is clear that if we are given a parametrized null geodesic  $\gamma:\left[0,1\right]%
\rightarrow M$, a curve $\lambda :\left( -\epsilon ,\epsilon
\right)\rightarrow M$ verifying that $\lambda \left( 0\right)
=\gamma\left( 0\right)$, and $W\left( s\right) $ a null vector field along $\lambda $ such that $W\left( 0\right) =\gamma ^{\prime }\left( 0\right) $, the family of curves:
\begin{equation}
\mathbf{f}\left( s,t\right) =\mathrm{exp}_{\lambda \left( s\right)
}\left(tW\left( s\right) \right)
\end{equation}%
is a geodesic variation of $\gamma (t )$ formed by null geodesics with $\mathbf{%
f}\left( 0,t\right) =\gamma \left( t\right)$ and $J\left( t\right) =\frac{%
\partial \mathbf{f}}{\partial s}\left(0,t\right) $.   

If $\mu$ is a null curve then we may use $W (s) = \mu'(s)$ and obtain a geodesic variation of $\gamma$ that, in addition, defines a celestial curve in $\mathcal{N}$.  Actually more is true as it is shown by the following:

\begin{proposition}
\label{prop-existence-mu}\cite{Ba14} If the curve $\Gamma :\left[ 0,1\right]
\rightarrow \mathcal{N}$ with $\Gamma \left( s\right) =\gamma _{s}\in
\mathcal{N}$ is celestial then there exists a differentiable
null curve $\mu :\left[ 0,1\right] \rightarrow M$ such that $\gamma
_{s}\left( \tau \right) =\exp _{\mu \left( s\right) }\left( \tau \sigma
\left( s\right) \right) $ where $\sigma \left( s\right) \in \mathbb{N}^{+}_{\mu
\left( s\right) }$ is a differentiable curve proportional to $\mu ^{\prime
}\left( s\right) $ wherever $\mu$ is regular.
\end{proposition}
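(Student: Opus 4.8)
The plan is to reformulate the hypothesis that $\Gamma$ is celestial as a differentiable selection problem for the \emph{foot point} of $\Gamma'(s)$ along $\gamma_s$, and then to solve that problem by an implicit function argument once uniqueness of the foot point has been secured.

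First I would record the Jacobi characterization of celestiality: for a celestial vector $[\bar J]\in\mathcal H_\gamma$, if $\bar J\in\mathcal L(\gamma)$ denotes the Jacobi representative with $\langle\bar J,\gamma'\rangle\equiv 0$, then $[\bar J]\in T_\gamma S(\gamma(t_0))$ if and only if $\bar J(t_0)$ is proportional to $\gamma'(t_0)$. One direction is immediate, since a Jacobi representative of a vector tangent to $S(\gamma(t_0))$ vanishes at $t_0$; for the converse, a Jacobi field vanishing at $t_0$ with $\langle\,\cdot\,,\gamma'\rangle\equiv 0$ has its derivative at $t_0$ orthogonal to $\gamma'(t_0)$, hence tangent to the null cone, so it is the variation field of a null geodesic variation fixing $\gamma(t_0)$. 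Next, fix a finite cover of $[0,1]$ by subintervals each mapped by $\Gamma$ into a single chart $\mathcal U_x$ coming from a globally hyperbolic neighbourhood $U_x$, where we may assume in addition that $U_x$ is contained in a geodesically convex open subset of $M$, and realize $\Gamma$ on such a subinterval by a smooth geodesic variation $\chi(s,t)=\exp_{c(s)}(t\,w(s))$, with $c(s)\in C_x$ and $w(s)$ a smooth future null direction. Its variation field $\bar J_s(t)=\partial_s\chi(s,t)$ is a Jacobi field representing $\Gamma'(s)$, and since $\Gamma'(s)\in T_{\gamma_s}S(\mu(s))\subset\mathcal H_{\gamma_s}$ the constant $\langle\bar J_s,\gamma_s'\rangle$ vanishes identically. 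Writing $f_s(t)$ for the class of $\bar J_s(t)$ in a smooth trivialisation of the quotient bundle $(\gamma_s')^\perp/\mathbb R\gamma_s'$ along the variation, the task becomes: produce a differentiable $t(s)$ with $f_s(t(s))=0$; then $\mu(s):=\gamma_s(t(s))$ and $\sigma(s):=\gamma_s'(t(s))$ will be the required data.

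The hard part will be this differentiable selection, and the reason it is delicate is that $f_s(t)=0$ is $m-2$ scalar equations in the single unknown $t$: the system is overdetermined, the naive implicit function theorem does not apply, and it is precisely celestiality that guarantees a solution for every $s$. I would argue in three steps. \emph{Uniqueness}: if $f_s$ vanished at two parameters $t_1\ne t_2$, subtracting the appropriate affine multiple of $\gamma_s'$ would produce a non-trivial Jacobi field vanishing at $t_1$ and $t_2$, i.e.\ conjugate points along $\gamma_s$ inside a convex neighbourhood, which is impossible; so $t(s)$ is unique in the range determined by the chart. \emph{Continuity}: $(s,t)\mapsto f_s(t)$ is continuous and $t(s)$ stays in a compact set, so any subsequential limit $t_\ast$ of $t(s_n)$, $s_n\to s_0$, satisfies $f_{s_0}(t_\ast)=0$, hence $t_\ast=t(s_0)$, giving continuity of $t(\cdot)$. \emph{Smoothness}: $\bar J_s$ solves a linear second order equation, so $f_s$ and $f_s'$ cannot vanish simultaneously unless $f_s\equiv 0$, which would force $[\bar J_s]=0$ and contradict $\Gamma'(s)\ne 0$; hence $f_s'(t(s))\ne 0$. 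Choosing a linear functional $\ell$ with $\ell\big(f_{s_0}'(t(s_0))\big)\ne 0$ and applying the ordinary scalar implicit function theorem to $h(s,t):=\ell(f_s(t))$ near $(s_0,t(s_0))$ gives a smooth local solution of $h=0$ which, by continuity of $t(\cdot)$ and the local uniqueness of that solution, coincides with $t(s)$; thus $t(\cdot)$ is differentiable. The finitely many local functions agree on overlaps by uniqueness, so $t$ is globally defined and differentiable on $[0,1]$; securing this uniqueness (and with it the global consistency of the foot point) is the essential point.

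Finally I would assemble the conclusion. With $\mu(s)=\gamma_s(t(s))$ and $\sigma(s)=\gamma_s'(t(s))\in\mathbb N^+_{\mu(s)}$ both differentiable, differentiation gives $\mu'(s)=\bar J_s(t(s))+t'(s)\,\gamma_s'(t(s))$, and since $\bar J_s(t(s))$ is a multiple of $\gamma_s'(t(s))$ this is a smooth multiple of $\sigma(s)$; hence $\mu$ is a null curve and $\sigma(s)$ is proportional to $\mu'(s)$ wherever $\mu$ is regular. Reparametrising each $\gamma_s$ by $\tau:=t-t(s)$, which does not change the light ray in $\mathcal N$, gives $\gamma_s(\tau)=\exp_{\mu(s)}(\tau\sigma(s))$, completing the argument.
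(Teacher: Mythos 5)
The paper itself offers no proof of Proposition~\ref{prop-existence-mu} --- it is imported wholesale from \cite{Ba14} --- so there is no in-house argument to compare yours against. Judged on its own terms, your strategy is the natural one and most of its components are sound: the Jacobi-field characterization of tangency to a sky, the uniqueness of the foot point inside a convex normal neighbourhood via the non-existence of conjugate points there (this is exactly the mechanism the paper uses to prove Corollary~\ref{corol-mu}), the non-degeneracy $f_s'(t(s))\neq 0$, the scalar implicit-function trick that upgrades continuity of $t(\cdot)$ to differentiability, and the final assembly of $\mu$ and $\sigma$ are all correct.

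The gap is in the continuity step, and it is not cosmetic --- it is the crux of the proposition. You assert that ``$t(s)$ stays in a compact set'' and then identify any subsequential limit with $t(s_0)$ by uniqueness. Neither half is justified: celestiality only says that for each $s$ the overdetermined system $f_s(t)=0$ has \emph{some} solution \emph{somewhere} along $\gamma_s$, and nothing confines that solution to the parameter range covered by your chart, nor to the convex neighbourhood in which your uniqueness argument operates (beyond it $\gamma_s$ may have conjugate points and hence several foot points, so $f_{s_0}(t_\ast)=0$ no longer forces $t_\ast=t(s_0)$). Your scalar IFT does produce a smooth branch $\hat t(s)$ through $(s_0,t(s_0))$, but it solves only the single equation $\ell(f_s(t))=0$; you never show that the remaining $m-3$ components of $f_s$ vanish along $\hat t(s)$, and pointwise solvability plus non-degeneracy does not force this. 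Indeed a smooth $F\colon\mathbb{R}\times[-\tfrac12,\tfrac32]\to\mathbb{R}^2$ given by $F(s,t)=\bigl(\sin\pi t,\ \alpha(s)(1-t)+\beta(s)t\bigr)$, with $\alpha$ smooth and vanishing exactly on $s\le 0$ and $\beta$ smooth and vanishing exactly on $s\ge 0$, has a non-empty, $t$-non-degenerate zero set over every $s$ and yet admits no continuous section: the zero jumps from $t=0$ to $t=1$ at $s=0$. So you must supply the geometric input that rules out such branch-jumping, i.e.\ that for $s$ near $s_0$ the sky to which $\Gamma'(s)$ is tangent has its base point near $\mu(s_0)$. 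Be aware that in the paper's own logical order this persistence statement is condition~(\ref{reg-4a}) in the definition of regular sets and is \emph{deduced} from the present proposition via Lemma~\ref{mu-lemma}, so it cannot be borrowed; it is precisely the piece your argument is missing.
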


In fact, by construction, the curve $\mu$ in Prop. \ref{prop-existence-mu} runs the points in $M$ such that the celestial curve $\Gamma$ is tangent to their skies, in other words,  $\Gamma'\left(s\right)\in \widehat{T}S\left(\mu\left(s\right)\right)$ for all $s\in\left[0,1\right]$.   

As a consequence of the previous result, we have the following corollary.

\begin{corollary}\label{corol-mu}
Given a celestial curve $\Gamma:\left[ 0,1\right]\rightarrow \mathcal{N}$ such that $\Gamma'\left(s_0\right)\in \widehat{T}S\left(p_0\right)$, $0\leq s_0\leq 1$, then the curve $\mu \colon \left[ 0,1\right]\rightarrow M$ of the previous proposition \ref{prop-existence-mu} is unique verifying $\mu\left(s_0\right)=p_0\in M$. 
\end{corollary}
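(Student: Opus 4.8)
The plan is to get uniqueness by a connectedness argument, the only local input being that inside a convex normal neighbourhood a null geodesic carries no pair of conjugate points. First I would reformulate the condition that $\Gamma$ is tangent to the sky of $\mu(s)$ in terms of Jacobi fields. Writing $\Gamma'(s) = [J_s]$ for a Jacobi field $J_s$ along a parametrization of $\gamma_s$, the relation $\Gamma'(s) \in \widehat{T}_{\gamma_s}S(\mu(s))$ is equivalent to $J_s$ agreeing, modulo $\gamma_s'$, with a Jacobi field vanishing at the parameter $\tau_{\mu(s)}$ where $\gamma_s$ meets $\mu(s)$; equivalently $J_s(\tau_{\mu(s)})$ is proportional to $\gamma_s'(\tau_{\mu(s)})$. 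I would also record that, by the very statement of Prop.~\ref{prop-existence-mu}, $\mu(s) = \exp_{\mu(s)}(0\cdot\sigma(s)) = \gamma_s(0)$ lies on $\gamma_s$ for every $s$; hence any two curves $\mu_1,\mu_2$ produced by Prop.~\ref{prop-existence-mu} for the same celestial curve $\Gamma$ have the property that $\mu_1(s)$ and $\mu_2(s)$ both lie on the single light ray $\gamma_s = \Gamma(s)$ for all $s$.

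Next I would prove the key \emph{local} statement: if $\gamma$ is a light ray, the points $x_1 = \gamma(\tau_1)$ and $x_2 = \gamma(\tau_2)$ lie in a convex normal neighbourhood $U$ with $\gamma|_{[\tau_1,\tau_2]} \subset U$, and some non-zero $[J] \in \widehat{T}_\gamma\mathcal{N}$ satisfies $[J] \in \widehat{T}_\gamma S(x_1) \cap \widehat{T}_\gamma S(x_2)$, then $x_1 = x_2$. Indeed, fixing a representative Jacobi field $J$ one has $J(\tau_i) = c_i\,\gamma'(\tau_i)$ for $i = 1,2$; letting $\ell$ be the affine function with $\ell(\tau_1) = c_1$, $\ell(\tau_2) = c_2$ (assuming $\tau_1 \neq \tau_2$), the field $\tilde J = J - \ell\,\gamma'$ is a Jacobi field along $\gamma$ vanishing at both $\tau_1$ and $\tau_2$, while $[\tilde J] = [J] \neq 0$ forces $\tilde J \not\equiv 0$. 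Thus $x_1$ and $x_2$ would be conjugate along the geodesic arc $\gamma|_{[\tau_1,\tau_2]}$, which is impossible inside a convex normal neighbourhood; hence $\tau_1 = \tau_2$.

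Finally, given two curves $\mu_1,\mu_2$ as in Prop.~\ref{prop-existence-mu} for $\Gamma$ with $\mu_1(s_0) = \mu_2(s_0) = p_0$, I would look at the set $A = \{ s \in [0,1] : \mu_1(s) = \mu_2(s)\}$, which is non-empty ($s_0 \in A$) and closed by continuity of $\mu_1,\mu_2$. To see that $A$ is open, fix $s_1 \in A$, put $q = \mu_1(s_1)$, and choose a convex normal neighbourhood $U$ of $q$. Since $\mu_1(s),\mu_2(s) \to q$ and $\gamma_s \to \gamma_{s_1}$ (which passes through $q$) as $s \to s_1$, there is a small interval $I \ni s_1$ such that for $s \in I$ both $\mu_1(s)$ and $\mu_2(s)$ lie in $U$ and the arc of $\gamma_s$ joining them is contained in $U$; this last point is the only genuinely delicate step, and I would handle it by a compactness argument in normal coordinates centred at $q$, using the explicit form $\gamma_s(\tau) = \exp_{\mu_1(s)}(\tau\sigma_1(s))$ and the fact that the parameter of $\mu_2(s)$ along this geodesic tends to $0$. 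By the remark following Prop.~\ref{prop-existence-mu} we have $\Gamma'(s) \in \widehat{T}_{\gamma_s}S(\mu_1(s)) \cap \widehat{T}_{\gamma_s}S(\mu_2(s))$, so the local statement above gives $\mu_1(s) = \mu_2(s)$ for all $s \in I$; hence $A$ is open. As $[0,1]$ is connected, $A = [0,1]$, i.e.\ $\mu_1 = \mu_2$, which is the claimed uniqueness. The main obstacle is exactly the control of the arc of $\gamma_s$ between $\mu_1(s)$ and $\mu_2(s)$ for $s$ near $s_1$; once that continuity input is secured, the rest is the standard absence of conjugate points in convex neighbourhoods combined with the clopen-set argument.
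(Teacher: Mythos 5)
Your proof is correct and follows essentially the same route as the paper's: a clopen-set argument on $A=\{s\in[0,1]:\mu_1(s)=\mu_2(s)\}$ whose local step reduces to the impossibility of conjugate points along a null geodesic segment contained in a normal neighbourhood (\cite[Prop. 10.10]{On83}), your affine subtraction $\tilde J = J-\ell\,\gamma'$ just making explicit the "mutually conjugate" claim the paper only asserts. The one delicate point you flag---keeping the arc of $\gamma_s$ between $\mu_1(s)$ and $\mu_2(s)$ inside the neighbourhood---is dispatched in the paper by simply choosing the neighbourhood causally convex, so that any causal curve with endpoints in it stays in it; you may substitute that for your continuity argument.
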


\begin{proof}
Consider that there exists $\mu_1,\mu_2: \left[0,1\right]\rightarrow M$ associated to $\Gamma$ in the sense of proposition \ref{prop-existence-mu} and verifying $\mu_1\left(s_0\right)=\mu_2\left(s_0\right)=p_0$ for $s_0\in \left[0,1\right]$.
Let us define the set $A=\left\{s\in \left[0,1\right]:\mu_1\left(s\right)=\mu_2\left(s\right)  \right\}$. 
Clearly, $A$ is not empty and closed in $\left[0,1\right]$. 
Consider a causally convex and normal neighbourhood $U\subset M$ of $p_0$. 
Since $U$ is open, then there exist $\delta >0$ such that $\mu_i\left(\left(s_0-\delta , s_0 +\delta\right)\right)\subset U$ for $i=1,2$ (eventually if $s_0=0$ then we consider $\mu_i\left(\left[0 , \delta\right)\right)\subset U$ and analogously for $s_0=1$). 
Let us suppose that for $s\in \left(s_0-\delta, s_0+\delta\right)$ we have that $\mu_1\left(s\right)\neq\mu_2\left(s\right)$ and since $U$ is causally convex, then the segment of the light ray $\Gamma\left(s\right)=\gamma_s\in \mathcal{N}$ connecting $\mu_1\left(s\right)$ and $\mu_2\left(s\right)$ is totally contained in $U$ and, moreover since $\Gamma'\left(s\right)\in \widehat{T}S\left(\mu_1\left(s\right)\right)\cap \widehat{T}S\left(\mu_2\left(s\right)\right)$, then the points $\mu_1\left(s\right)$ and $\mu_2\left(s\right)$ are mutually conjugated along $\gamma_s$ but, in virtue of \cite[Prop. 10.10]{On83}, this is not possible in a normal neighbourhood contradicting $U$ is normal.
Then we have that $\mu_1\left(s\right)=\mu_2\left(s\right)$ and hence the set $A$ is also open in $\left[0,1\right]$.
Since $A$ is open, closed and not empty in $\left[0,1\right]$ then $A=\left[0,1\right]$ and we conclude that $\mu_1=\mu_2$. 
\end{proof}

Given a celestial curve $\Gamma$ the unique curve $\mu$ associated to it in the sense of Prop. \ref{prop-existence-mu}  passing by $p_0 \in S^{-1}(X_0)$ will be called the ``dust'' of $\Gamma$ by $X_0$ and denoted by $\mu_{X_0}^\Gamma$.   The previous arguments can be made more precise by proving that the dust of a celestial curve is a twisted null curve.  This is the content of the next Lemma.

\begin{lemma}[$\mu$-Lemma]\label{mu-lemma}
Let $\Gamma:\left[0,1\right]\rightarrow \mathcal{N}$ be a celestial curve such that $\Gamma'\left(0\right)\in \widehat{T}X_0$ with $X_0\in \Sigma$. 
Then there exists a unique curve $\chi_{X_0}^{\Gamma}\colon \left[0,1\right]\rightarrow \Sigma$ 
such that it is continuous in Low's topology and verifies  $\chi_{X_0}^{\Gamma}\left(0\right)=X_0$ and $\Gamma'\left(s\right)\in \widehat{T}\chi_{X_0}^{\Gamma}\left(s\right)$.
Moreover, the dust curve $\mu^{\Gamma}_{X_0}$ is a piecewise twisted null curve in $M$ running along the image of $S^{-1} \circ \chi_{X_0}^{\Gamma}$.

Conversely, given a regular twisted null curve $\mu\colon \left[0,1\right]\rightarrow M$ such that $\mu\left(0\right)=x_0=S^{-1}\left(X_0\right)$, $\mu'(0) \neq 0 \neq \mu'(1)$, then the curve $\Gamma^{\mu}:\left[0,1\right]\rightarrow \mathcal{N}$ defined by the variation of null geodesics $\mathbf{x}:\left[0,1\right]\times I \rightarrow M$ such that 
\[
\mathbf{x}\left(s,t\right)=\mathrm{exp}_{\mu\left(s\right)}\left(t\mu'\left(s\right)\right)=\left.\Gamma^{\mu}\left(s\right)\right|_{t}
\]
is celestial with $\Gamma'\left(0\right)\in \widehat{T}X_0$ and $\chi_{X_0}^{\Gamma}\left(s\right)=S\left(\mu\left(s\right)\right)$.
\end{lemma}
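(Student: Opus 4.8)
My plan is to produce $\chi_{X_0}^{\Gamma}$ by construction and then show it is forced. I set $\chi_{X_0}^{\Gamma}:=S\circ\mu_{X_0}^{\Gamma}$, where $\mu_{X_0}^{\Gamma}$ is the dust curve furnished by Prop. \ref{prop-existence-mu} and singled out by Cor. \ref{corol-mu} via the normalisation $\mu_{X_0}^{\Gamma}(0)=S^{-1}(X_0)$. Three of the asserted properties are then immediate: $\chi_{X_0}^{\Gamma}$ is continuous in Low's topology because $\mu_{X_0}^{\Gamma}$ is a differentiable curve in $M$ and $S$ is continuous; $\chi_{X_0}^{\Gamma}(0)=S(\mu_{X_0}^{\Gamma}(0))=X_0$; and $\Gamma'(s)\in\widehat{T}\chi_{X_0}^{\Gamma}(s)=\widehat{T}S(\mu_{X_0}^{\Gamma}(s))$ is exactly the geometric content of Prop. \ref{prop-existence-mu} recorded in the remark following it. What remains for the direct statement is the uniqueness of $\chi_{X_0}^{\Gamma}$ and the claim that $\mu_{X_0}^{\Gamma}$ is piecewise twisted null.

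It is worth isolating one elementary computation that serves both these tasks and the converse. If $\nu$ is a null curve in $M$ with $\nu'(s)\neq 0$, then $\mathbf{x}(s,t)=\exp_{\nu(s)}(t\nu'(s))$ is a variation by null geodesics, its $\partial/\partial s$-Jacobi field $J_s$ along $\gamma_s:=\mathbf{x}(s,\cdot)$ satisfies $J_s(0)=\nu'(s)=\gamma_s'(0)$ and $J_s'(0)=\frac{D\nu'}{ds}(s)$, and $[J_s]=0$ in $T_{\gamma_s}\mathcal{N}$ if and only if $\nu'(s)$ and $\frac{D\nu'}{ds}(s)$ are linearly dependent (a Jacobi field with $J_s(0)$ and $J_s'(0)$ proportional to $\gamma_s'(0)$ is $(at+b)\gamma_s'$). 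For the piecewise twisted null claim, write $\mu:=\mu_{X_0}^{\Gamma}$; by Prop. \ref{prop-existence-mu} it is differentiable, null and an immersion off the (generically isolated) zeros of the proportionality factor between $\mu'$ and the nowhere-vanishing field of that proposition, and I take those zeros as the partition points. On each open piece $\mu$ is a null immersion, so (i) and (ii) of Def. \ref{definition1} hold; and there $\Gamma$ is represented by the variation $\exp_{\mu(s)}(t\mu'(s))$, so the computation above turns $[J_s]=\Gamma'(s)\neq 0$ into (iii). That $\mu$ runs along $S^{-1}\circ\chi_{X_0}^{\Gamma}$ is built into the construction. (On an interval where $\mu$ is constant $\Gamma$ moves inside one sky, a degenerate case treated trivially.)

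For uniqueness, let $\widetilde\chi\colon[0,1]\to\Sigma$ be continuous in Low's topology with $\widetilde\chi(0)=X_0$ and $\Gamma'(s)\in\widehat{T}\widetilde\chi(s)$. The tangency forces $\gamma_s=\Gamma(s)\in\widetilde\chi(s)$, so $\widetilde\chi(s)=S(q(s))$ with $q(s)\in\gamma_s$ a point at which the Jacobi field representing the non-zero vector $\Gamma'(s)$ vanishes modulo $\gamma_s'$; exactly as in the proof of Cor. \ref{corol-mu} that zero is transverse, so by the implicit function theorem the zeros along nearby light rays organise into continuous branches, while \cite[Prop. 10.10]{On83} forbids two of them in a common normal neighbourhood. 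Hence the set $A=\{s:\widetilde\chi(s)=\chi_{X_0}^{\Gamma}(s)\}$ is open and contains $0$; it is also closed, because for $s_n\to s_0$ with $s_n\in A$ the sequence $S(\mu(s_n))=\widetilde\chi(s_n)$ converges for Low's topology both to $S(\mu(s_0))$ and to $\widetilde\chi(s_0)$, and distinct skies — being transverse along their finitely many common light rays — admit Low-open neighbourhoods with no common sky, so the two limits agree. Thus $A=[0,1]$. I expect this uniqueness to be the real obstacle: since $\widetilde\chi$ is only assumed continuous for Low's topology — whose good behaviour is precisely what later sections establish — one cannot lift $\widetilde\chi$ to a curve in $M$ and quote Cor. \ref{corol-mu}, and the delicate points are the uniform control of the branches (ruling out that $q(s)$ runs off along $\gamma_s$) and the separation property of Low's topology used for closedness.

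For the converse, apply the computation of the second paragraph with $\nu=\mu$, the given regular twisted null curve: condition (iii) on $(0,1)$ together with $\mu'(0)\neq 0\neq\mu'(1)$ gives $[J_s]\neq 0$ for all $s$, so $\Gamma^{\mu}(s)=[\exp_{\mu(s)}(t\mu'(s))]$ is a genuine light ray and $\Gamma^{\mu}$ is immersed. For every constant $a$, the field $J_s-(at+1)\gamma_s'$ is a Jacobi field vanishing at $t=0$ (since $J_s(0)=\gamma_s'(0)$), hence represents a tangent vector of $S(\mu(s))$ at $\gamma_s$; it equals $[J_s]=(\Gamma^{\mu})'(s)\neq 0$, so $(\Gamma^{\mu})'(s)\in\widehat{T}S(\mu(s))$, which makes $\Gamma^{\mu}$ celestial and gives $(\Gamma^{\mu})'(0)\in\widehat{T}S(x_0)=\widehat{T}X_0$. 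Finally $s\mapsto S(\mu(s))$ is continuous for Low's topology, equals $X_0$ at $s=0$, and carries $\Gamma^{\mu}$ tangentially, so by the uniqueness already established it is $\chi_{X_0}^{\Gamma^{\mu}}$, the last assertion.
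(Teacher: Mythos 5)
Your construction is essentially the paper's: you take the curve $\mu$ of Prop.~\ref{prop-existence-mu}, normalised by Cor.~\ref{corol-mu}, set $\chi_{X_0}^{\Gamma}=S\circ\mu$, verify the twisted condition through the Jacobi-field identities $J_s(0)=\mu'(s)$, $J_s'(0)=\frac{D\mu'}{ds}(s)$ (mod reparametrisation terms), and run the same computation backwards for the converse, using that $J_s-\gamma_s'$ vanishes at $t=0$ to place $[J_s]$ in $\widehat{T}S(\mu(s))$. Two points differ. First, for continuity you simply quote that $S$ is continuous for Low's topology (which the paper asserts earlier), whereas the paper reproves it along $\mu$ by a sequential argument using compactness of the fibres $\mathbb{PN}^+_{\mu(s_n)}$; your shortcut is legitimate and cleaner. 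Second, and more substantively, you attempt uniqueness of $\chi_{X_0}^{\Gamma}$ among \emph{all} Low-continuous curves with the tangency property, while the paper only secures uniqueness of the differentiable dust curve via Cor.~\ref{corol-mu} and does not return to the uniqueness claim in this generality. Your openness step (transversality of the zero of $J_s$ modulo $\gamma_s'$, plus \cite[Prop.~10.10]{On83} excluding conjugate points in a normal neighbourhood) is sound, but the closedness step leans on the assertion that two distinct skies can be separated by Low-open sets. That is a Hausdorff-type property of $\Sigma$ which is not available at this stage of the paper and is in fact delicate: the refocussing phenomenon of Thm.~\ref{convergence} shows that limits of skies need not behave like limits of points, so uniqueness of limits in Low's topology cannot be taken for granted. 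You flag this yourself, correctly, as the weak point; a safer route is to show that the set where $\widetilde\chi$ and $\chi_{X_0}^{\Gamma}$ \emph{disagree} is also open (at a disagreement parameter the two skies are tangent to $\Gamma'$ at distinct, hence mutually conjugate, points of the same light ray, and the two transverse branches of zeros of $J_s$ remain disjoint nearby), so that $A$ is clopen without invoking any separation axiom for $\Sigma$. Apart from that one step, the proposal matches the paper's argument and is correct.
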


\begin{proof}
Let $\Gamma:\left[0,1\right]\rightarrow \mathcal{N}$ be a celestial curve such that $\Gamma\left(s\right)=\gamma_s\in \mathcal{N}$ and  $\Gamma'\left(0\right)\in \widehat{T}X_0$ with $X_0=S\left(x_0\right)\in \Sigma$. 
By corollary \ref{corol-mu}, there exists a unique differentiable curve $\mu: \left[0,1\right]\rightarrow M$ and a partition 
\[
\left\{0=a_1 \leq b_1 < a_2 \leq b_2 <\cdots < a_{n-1}\leq b_{n-1}<a_n\leq b_n =1 \right\}\subset \left[0,1\right]
\]
such that 
\begin{equation}\label{equat-mu-lemma-1}
\gamma_{s}\left( \tau \right) =\exp _{\mu \left( s\right) }\left( t \sigma\left( s\right) \right) 
\end{equation}
where $\sigma:\left[0,1\right]\rightarrow \mathbb{N}$ is a differentiable curve verifying $\sigma\left(s\right)=\lambda_k\left(s\right)\mu'\left( s\right)$ for $s\in \left(b_k,a_{k+1}\right)$ and $\lambda_k$ differentiable with $k=1,\ldots , n-1$. 
This curve $\mu$ also verifies $\mu\left(s\right)=p_k\in M$ for all $s\in \left[a_k,b_k\right]$.

Now, we can define the curve $\chi_{X_0}^{\Gamma}=S\circ \mu: \left[0,1\right]\rightarrow \Sigma$.
Recall that for an open set $\mathcal{U}\subset \mathcal{N}$ containing a sky $X\in \Sigma$, the set of all skies contained in $\mathcal{U}$ is denoted as $\Sigma\left(\mathcal{U}\right)$. 
By the definition of the Low's topology, the set $\Sigma\left(\mathcal{U}\right)$ is open in $\Sigma$ and these collection of open sets forms a basis at $X$.

In order to show that $\chi_{X_0}^{\Gamma}$ is continuous, we will show that, given any $\mathcal{U}\subset \mathcal{N}$ containing a sky $S\left(\mu\left(s\right)\right)\in \Sigma$ then $\left(\chi_{X_0}^{\Gamma}\right)^{-1}\left(\Sigma\left(\mathcal{U}\right)\right)$ is open in $\left[0,1\right]$ is verified. 
So, take any $s\in \left[0,1\right]$ and consider an open set $\mathcal{U}\subset \mathcal{N}$ such that $\chi_{X_0}^{\Gamma}\left(s\right)\subset \mathcal{U}$ and then $\chi_{X_0}^{\Gamma}\left(s\right)\in \Sigma\left(\mathcal{U}\right)$.
Choose a collection of nested intervals $I_n^s\subset \mathbb{R}$ such that $\{s \} = \bigcap_{n} I_n^s$. 
Let us suppose that there exists $s_n\in I_n^s$ such that $\chi_{X_0}^{\Gamma}\left(s_n\right)\notin \Sigma\left(\mathcal{U}\right)$. 
Then there is a light ray $\gamma_n\in \chi_{X_0}^{\Gamma}\left(s_n\right)\in \Sigma$ such that $\gamma_n\notin \mathcal{U}$.
Recall that a light ray is fully determined by a point $p\in M$ and a direction $[v] \in \mathbb{PN}^+_p$, so $\gamma_n$ can be defined by $\mu\left(s_n\right)\in \gamma_n \subset M$ and a null direction $[v_n] \in \mathbb{PN}^+_{\mu\left(s_n\right)}$.  Since $\lim \mu\left(s_n\right) =\mu\left(s\right)$ and due to the compactness of the fibres $\mathbb{PN}^+_{\mu\left(s_n\right)}$, then with no lack of generality taking a subsequence of $[v_n]$ if necessary, there exists a direction $[v]\in \mathbb{PN}^+_{\mu\left(s\right)}$ defining, together with $\mu\left(s\right)$, the light ray $\gamma$ such that $\lim \gamma_n = \gamma \in \chi_{X_0}^\Gamma(s) \subset \mathcal{U}$.

But since $\mathcal{U}$ is open, there exists an integer $K$ such that for every $n>K$ we have that $\gamma_n\in\mathcal{U}$ contradicting that $\chi_{X_0}^{\Gamma}\left(s_n\right)\notin \Sigma\left(\mathcal{U}\right)$.
Therefore there exist $I_n^s$ such that  $\chi_{X_0}^{\Gamma}\left(s_n\right)\in \Sigma\left(\mathcal{U}\right)$ and hence $\left(\chi_{X_0}^{\Gamma}\right)^{-1}\left(\Sigma\left(\mathcal{U}\right)\right)$ is open in $\left[0,1\right]$.
 
To obtain the dust $\mu^{\Gamma}_{X_0}$, we will cut off the segments $\left.\mu\right|_{\left(a_k,b_k\right)}$ from $\mu$ and glue together the segments $\left.\mu\right|_{\left[b_k,a_{k+1}\right]}$.   We call $c_1 = 0$ and for 
every $k=1,\ldots,n-1$, let us define $c_{k+1} = a_{k+1} - \sum_{i=1}^{k}\left(b_i -a_i\right)\in \left[0,1\right]$ and consider the change of parameter $h_k:\left[c_k,c_{k+1}\right]\rightarrow \left[b_k,a_{k+1}\right]$ defined by $h_k\left(\tau\right)=\tau +a_{k+1} -c_{k+1}$. 
Since $\mu$ is differentiable and $h_k$ is a diffeomorphism for every $k=1,\ldots,n-1$ then $\overline{\mu}_k\left(\tau\right)=\mu\circ h_{k}\left(\tau\right)$ is differentiable for $\tau\in \left(c_k,c_{k+1}\right)$.
Moreover, since $\overline{\mu}'_k\left(\tau\right)=\mu'\left(h_{k}\left(\tau\right)\right)$ then 
\[
\mathbf{g}\left(\overline{\mu}'_k\left(\tau\right),\overline{\mu}'_k\left(\tau\right)\right)=\mathbf{g}\left(\mu'_k\left(h_{k}\left(\tau\right)\right),\mu'_k\left(h_{k}\left(\tau\right)\right)\right)=0
\]
for $\tau\in \left(c_k,c_{k+1}\right)$.
Also, the covariant derivatives verify 
\[
\frac{D\overline{\mu}'_k\left(\tau\right)}{d \tau} = h''_k\left(\tau\right)\mu'\left(h_k\left(\tau\right)\right) + \left(h'_k\left(\tau\right)\right)^2 \frac{D\mu'\left(h_k\left(\tau\right)\right)}{ds} = \frac{D\mu'\left(h_k\left(\tau\right)\right)}{ds}
\]
then denoting $J_s$ as the Jacobi field along $\gamma_s$ defined by the variation \ref{equat-mu-lemma-1}, we have $J_s\left(0\right)=\mu'\left(s\right)$ and 
\[
J'_s\left(0\right)=\frac{D\sigma\left(s\right)}{ds}=\frac{D\left(\lambda_k\left(s\right)\mu'\left(s\right)\right)}{ds}=\lambda'_k\left(s\right)\mu'\left(s\right)+\lambda_k\left(s\right)\frac{D\mu'\left(s\right)}{ds}
\]
for $s\in \left(b_k,a_{k+1}\right)$.
Since $\Gamma$ is celestial, then $J_s\neq 0\left(\mathrm{mod}\gamma'_s\right)$ and so, $\frac{D\mu'\left(s\right)}{ds}$ is not proportional to $\mu'\left(s\right)$ for $s\in \left(b_k,a_{k+1}\right)$, therefore $\frac{D\overline{\mu}'_k\left(\tau\right)}{d \tau}$ and $\overline{\mu}'_k\left(\tau\right)$ are linearly independent for $\tau\in \left(c_k,c_{k+1}\right)$. 
We have shown that for any $k=1,\ldots,n-1$ the curves $\overline{\mu}_k$ are twisted null curves.
Since $h^{-1}_k\left(a_{k+1}\right)=h^{-1}_{k+1}\left(b_{k+1}\right)$ then all the segments $\overline{\mu}_k$ glue together continuously.
Therefore we can define, with no ambiguity, the curve $\mu_{X_0}^{\Gamma}:\left[0,a\right]\rightarrow M$ such that $\mu_{X_0}^{\Gamma}\left(\tau\right)=\overline{\mu}_k\left(\tau\right)$ if $\tau\in\left[c_k,c_{k+1}\right]$ for $k=1,\dots,n-1$ and $\left[0,a\right]=\cup_{k=1}^{n-1}\left[c_k,c_{k+1}\right]$. This curve $\mu_{X_0}^{\Gamma}$ is then a piecewise twisted null curve associated to the partition $\left\{0=c_1  < c_2  <\cdots < c_{n} = a \right\}\subset \left[0,a\right]$ and it is unique except by reparametrization. 

Conversely, let us consider a twisted null curve $\mu: \left[0,1\right]\rightarrow M$ such that $\mu\left(0\right)=x_0=S^{-1}\left(X_0\right)$. Then, we can define the variation of null geodesics $\mathbf{x}:\left[0,1\right]\times I \rightarrow M$ such that 
\[
\mathbf{x}\left(s,t\right)=\mathrm{exp}_{\mu\left(s\right)}\left(t\mu'\left(s\right)\right)=\gamma_s\left(t\right)
\]
which verifies $\gamma'_s\left(0\right)=\mu'\left(s\right)$.
Now, define the curve $\Gamma^{\mu}\left(s\right)=\gamma_s \in \mathcal{N}$ for every $s\in \left[0,1\right]$. 
The Jacobi field $J_s$ of the variation $\mathbf{x}$ along $\gamma_s$ verifies $J_s\left(0\right)=\mu'\left(s\right)=\gamma'_s\left(0\right)$ and $J'_s\left(0\right)=\frac{D\mu'}{ds}\left(s\right)$ and, since $\mu$ is twisted null then $\frac{D\mu'}{ds}$ is not proportional to $\gamma'_s$. 
Therefore $\left(\Gamma^{\mu}\right)'\left(s\right)=J_s\left(\mathrm{mod}\gamma'_s\right)\neq 0\left(\mathrm{mod}\gamma'_s\right)$ and hence \[
\left(\Gamma^{\mu}\right)'\left(s\right)\in \widehat{T}S\left(\gamma_s\left(0\right)\right)=\widehat{T}S\left(\mu\left(s\right)\right)
\]
then $\Gamma^{\mu}$ is celestial. 
\end{proof}


\subsection{Celestial curves and the partial order in the space of skies}\label{partial_order_skies}

We have already pointed it out that if $x \prec y$, then their corresponding skies are related $S(x) \prec_c S(y)$.   The discussion to follow will show that such relation can actually be refined by proving that in case of  $y \in I^+(x)$\footnote{Recall that $y\in I^+(x)$ means that there exists a future time-like  curve from $x$ to $y$.}, there exists a causal piecewise twisted null curve joining $x$ and $y$, hence relating the causal properties of $\Sigma$ to the existence of appropriate celestial curves.

\begin{theorem}[Twisted null curve theorem]\label{mu-teorema}
Let $p,q\in M$ such that $q\in I^+(p)$, then there exists a future piecewise twisted null curve $\mu$ joining $p$ to $q$.
\end{theorem}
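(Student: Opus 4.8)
The plan is to build the twisted null curve joining $p$ to $q$ by a local "zig-zag" construction and then patch the local pieces together. Since $q \in I^+(p)$, there is a future time-like curve $c\colon[0,1]\to M$ from $p$ to $q$. The first step is to reduce to a \emph{local} statement: by strong causality, cover the image $c([0,1])$ by finitely many globally hyperbolic, causally convex, normal neighbourhoods $U_1,\dots,U_N$, and choose intermediate points $p=x_0, x_1,\dots,x_N=q$ along $c$ with $x_{j-1}, x_j$ lying in a common $U_j$ and $x_j \in I^+(x_{j-1})$ within $U_j$. If we can produce, inside each $U_j$, a future piecewise twisted null curve from $x_{j-1}$ to $x_j$, the concatenation is a future piecewise twisted null curve from $p$ to $q$ (the junctions at the $x_j$ are allowed breakpoints of the partition). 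So it suffices to treat the case where $p,q$ lie in a single globally hyperbolic normal neighbourhood $U$ with $q \in I^+_U(p)$.

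The second step is the local construction in the normal neighbourhood $U$. The key point is condition (iii) in Definition \ref{definition1}: a null geodesic segment alone is \emph{not} twisted (there $\mu'$ and $D\mu'/ds$ are dependent), so we must genuinely bend. Using normal coordinates centred at $p$, the future null cone $\partial J^+_U(p)$ is (the image under $\exp_p$ of) the null cone in $T_pM$, and for $q \in I^+_U(p)$ the point $q$ lies strictly inside. The idea is: go along a future null geodesic from $p$ in some direction $v_0 \in \mathbb{N}^+_p$ to an intermediate point $r$, then "turn" at $r$ and go along a future null geodesic in a different direction to reach $q$. Because $q\in I^+_U(p)$, there is genuine room in the interior of the cone, so two such null legs (possibly a few of them) with distinct directions can be chosen so that the broken curve runs from $p$ to $q$ and each leg is future-pointing null. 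To upgrade the broken null geodesic to a \emph{smooth} twisted null curve — so that condition (iii) holds on the open subintervals, not just piecewise-geodesic behaviour — I would smooth the corners: replace a neighbourhood of each corner $r$ by a short smooth null arc interpolating the two null directions, keeping it future-pointing and null; along such a genuinely curving null arc $\mu'$ and $D\mu'/ds$ are automatically independent, while on the straight legs one either accepts them as extra breakpoints or perturbs them slightly to a curving null curve. (One must check that a curve staying null can actually change direction — this is where the Lorentzian signature enters: the null cone is a nondegenerate quadric, so there are plenty of null curves with non-null, non-geodesic acceleration.)

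The third step is bookkeeping: collect all the breakpoints from the finite cover (the $x_j$), from the finitely many corners in each $U_j$, and from any straight-leg perturbations, into a single partition $0 = s_0 < s_1 < \dots < s_k = 1$, and verify that on each open subinterval the resulting curve is differentiable, null, satisfies the independence condition (iii), and is future-causal — the last because every leg was chosen future null and causality is preserved under concatenation in the causally convex pieces. I expect the main obstacle to be the local construction in $U$: one must show not merely that a future broken-null curve from $p$ to $q$ exists (that is essentially the openness of $I^+_U(p)$ together with the structure of the null cone in normal coordinates), but that it can be arranged to be \emph{smooth with non-geodesic null legs}, i.e. genuinely twisted, while staying future-directed and null throughout — controlling the acceleration of a smoothed null arc without letting it leave the null cone or flip time-orientation. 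Once the single-neighbourhood case is settled with these properties, the globalisation by the finite causally convex cover and the partition bookkeeping are routine.
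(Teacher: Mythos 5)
Your outer reduction (cover the time-like curve from $p$ to $q$ by finitely many globally hyperbolic, causally convex normal neighbourhoods, solve the problem locally, concatenate) is exactly the paper's globalisation, and that part is fine. The genuine gap is in the local step, and it is not merely a technicality you have deferred: the route you sketch runs into a structural obstruction. A broken future null geodesic from $p$ to $q$ exists, as you say, but it is \emph{not} a piecewise twisted null curve, and ``accepting the straight legs as extra breakpoints'' cannot fix this: condition (iii) of Definition \ref{definition1} must hold on \emph{every} open subinterval of the partition, and a geodesic leg violates it on the whole of its (open) domain no matter how finely you subdivide. The alternative, perturbing a straight leg to a curving null curve with the same endpoints, is impossible for the first leg: its endpoint $r$ lies on $\partial J^+_U(p)$, and inside a normal neighbourhood the only causal curve from $p$ to such a point is the null geodesic itself --- any non-geodesic (i.e.\ twisted) null curve issuing from $p$ immediately enters $I^+(p)$. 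So you must move the intermediate corner points into $I^+(p)$, and then you need to reach those \emph{prescribed} interior points by twisted null curves --- which is precisely the statement being proved. The argument is circular at its core: what is missing is quantitative control showing that the endpoints of a family of twisted null curves from $p$ actually sweep out a full neighbourhood in $I^+(p)$.

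The paper resolves exactly this point by an explicit construction rather than a perturbation. In Lemma \ref{step1} it works in synchronous coordinates adapted to a future time-like geodesic $\gamma$ through $p=\gamma(t_0)$ and considers the one-parameter family of spiralling curves $\phi(\mu_r(s))=\bigl(f_r(s),\,r(1-\cos s),\,r\sin s\bigr)$, where $f_r$ is forced by the null condition to solve a first-order ODE; twistedness is checked by a determinant estimate valid for small $r$, and an analysis of $f_r(2\pi)$ as $r$ varies (continuity plus the bound $t_0\le f_r(2\pi)\le 2\pi rM+t_0$) shows the return time sweeps out an interval $(t_0,t_0+\delta]$, so every sufficiently close point $\gamma(t)$ is hit. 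Lemma \ref{step2} reduces general dimension to dimension $3$ by restricting to a $3$-dimensional slice, and Proposition \ref{mu-dim-3} plus compactness then joins arbitrary points of a time-like geodesic, after which the globalisation you describe applies. If you want to salvage your approach, you would need to replace the ``smooth the corners and perturb the legs'' step by some such explicit family with a parameter you can tune to hit the target --- at which point you have essentially reconstructed the paper's argument.
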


To prove the previous Theorem we will need some lemmas.

\begin{lemma}\label{step1}   Let $M$ be a 3--dimensional space-time and $\gamma \colon I \to M$ be a future time-like geodesic.   Then there exists $\delta > 0$ such that for any $t\in (t_0,t_0+\delta]$, there exists a future twisted null curve $\mu$ joining $\gamma (t_0)$ to $\gamma (t)$.
\end{lemma}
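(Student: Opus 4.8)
The plan is to reduce the assertion to a local construction in a normal neighbourhood of $p=\gamma(t_0)$ and to produce there an explicit ``spiralling'' null curve. First I would fix a convex normal neighbourhood $U$ of $p$ together with Riemann normal coordinates $(x^0,x^1,x^2)$ on $U$ centred at $p$, chosen so that $g_{ij}(0)=\eta_{ij}=\mathrm{diag}(-1,1,1)$ and $\gamma'(t_0)=\nu\,(\partial_0)_p$ with $\nu>0$. Since $\gamma$ is the geodesic issuing from $p$ with velocity $\nu\,\partial_0$, in these coordinates it is the straight ray $\gamma(t)=\bigl((t-t_0)\nu,0,0\bigr)$, so writing $a:=(t-t_0)\nu>0$ the target becomes $q=(a,0,0)$, with $a\to 0$ as $t\to t_0$. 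It is then enough to exhibit, for every sufficiently small $a>0$, a future twisted null curve from $(0,0,0)$ to $(a,0,0)$ whose image lies in $U$; the uniform threshold on such $a$ yields the $\delta>0$ in the statement.

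For the construction I would use, for each parameter $c>0$, the planar based loop $\beta_c\colon[0,1]\to\mathbb{R}^2$, $\beta_c(s)=\tfrac{c}{2\pi}\bigl(\cos 2\pi s-1,\ \sin 2\pi s\bigr)$, a circle of circumference $c$ with $\beta_c(0)=\beta_c(1)=(0,0)$, and the ansatz
\[
\mu_c(s)=\bigl(\tau_c(s),\,a\beta_c^1(s),\,a\beta_c^2(s)\bigr),\qquad s\in[0,1],
\]
where $\tau_c$ solves $\tau_c(0)=0$ together with the null condition $g_{ij}(\mu_c)\dot\mu_c^i\dot\mu_c^j=0$. The latter is quadratic in $\tau_c'$, with discriminant $\approx 4a^2|\beta_c'|^2>0$ for $a$ small, and its unique future (positive) root is a smooth function of $(s,\tau_c)$ close to $a|\beta_c'(s)|=ac$; hence $\tau_c$ exists and is smooth, $\tau_c'>0$, and $\mu_c$ is a future-directed null curve. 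By construction $\mu_c(0)=p$, and since the spatial part returns to the origin $\mu_c(1)=(\tau_c(1),0,0)$; moreover the image of $\mu_c$ stays in a ball of radius $O(a)$ about $p$, hence in $U$ once $a$ is small.

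It remains to make $\mu_c$ land exactly on $q$, i.e.\ to choose $c$ with $\tau_c(1)=a$. Integrating $\tau_c'=ac+o(a)$, uniformly in $s$ and in $c$ on compact sets, gives $\tau_c(1)=ac+o(a)$, so the continuous map $c\mapsto \tau_c(1)/a$ tends to $c$ as $a\to 0$; thus for all small $a$ it is below $1$ at $c=\tfrac12$ and above $1$ at $c=2$, and the intermediate value theorem produces $c^\ast\in(\tfrac12,2)$ with $\tau_{c^\ast}(1)=a$. Put $\mu:=\mu_{c^\ast}$. Finally, $\mu$ is \emph{twisted}: in the chosen coordinates $\mu'=(\tau_{c^\ast}',\,a\beta_{c^\ast}')$ and $\tfrac{D\mu'}{ds}=(\tau_{c^\ast}'',\,a\beta_{c^\ast}'')+O(a^3)$, and the plane vectors $\beta_{c^\ast}'(s),\beta_{c^\ast}''(s)$ are linearly independent for every $s$ since a circle has nowhere vanishing curvature (here $\det[\beta_{c^\ast}'\,|\,\beta_{c^\ast}'']=2\pi(c^\ast)^2\neq 0$); therefore for $a$ small the spatial components of $\mu'$ and $\tfrac{D\mu'}{ds}$ already span a plane, so $\mu'$ and $\tfrac{D\mu'}{ds}$ are linearly independent along $\mu$. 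Hence $\mu$ is a future twisted null curve joining $p$ to $q$.

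The step I expect to be the crux is the freedom count in the last paragraph: the loop ansatz makes the two spatial endpoint conditions hold automatically, so only the scalar equation $\tau_c(1)=a$ remains, and it must be solved \emph{uniformly} in small $a$ — which is precisely what the leading-order relation ``elapsed time $\approx a\times(\text{loop length})$'' plus a one-parameter continuity argument deliver; controlling the curvature corrections to the null ODE uniformly in $a$ is the only place genuine care is needed. The hypothesis $\dim M=3$ enters merely through the shape of the ansatz (a two-dimensional spatial slice carrying the circle); the same argument works verbatim in any dimension $\ge 3$ by placing the circle in a coordinate $2$-plane, but the $3$-dimensional case is the one needed for the induction leading to Theorem~\ref{mu-teorema}.
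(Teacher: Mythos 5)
Your construction is correct and rests on the same core idea as the paper's: lift a small spatial circle around $\gamma(t_0)$ to a future null helix by solving an ODE for the time coordinate, obtain the twisted condition from the nonvanishing curvature of the circle via the spatial $2\times 2$ determinant of $\mu'$ and $\tfrac{D\mu'}{ds}$, and close the one remaining scalar endpoint condition by a one-parameter intermediate value argument. The differences are in the bookkeeping, and one of them is a genuine simplification. The paper works in synchronous coordinates, where $g_{00}=-1$ and $g_{0i}=0$, so the null condition is the explicit first-order ODE $f_r'=r\sqrt{\cdots}$ rather than your quadratic in $\tau_c'$; in exchange you must note that the quadratic has exactly one positive root (true near $p$ since $g_{00}<0$ and the spatial velocity is spacelike) and that this root is the future-directed one --- $\tau_c'>0$ alone does not imply future-directedness in arbitrary coordinates, but here $g(\mu_c',\partial_0)=g_{00}\tau_c'+g_{0i}\,a(\beta_c')^i=-\tau_c'+O(a^3)<0$ because $\tau_c'\geq a/2+o(a)$, so this is a one-line fix. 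More significantly, the paper fixes the target $\overline{t}$ and shrinks the radius $r$, and therefore must prove that the solution $f_r$ survives a full turn $s\in[0,2\pi]$; this is the $\omega_r$ analysis and the lower bound $s_r\geq(\overline{t}-t_0)/(rM)$ that occupy the middle of their proof. You instead scale the entire loop with the target $a$ and always traverse exactly one turn over $[0,1]$, so existence of $\tau_c$ on all of $[0,1]$ is automatic from the uniform bound $\tau_c'=O(a)$ and the curve remaining in a fixed compact neighbourhood; the uniformity you flag as the crux is supplied by $g=\eta+O(|x|^2)$ in normal coordinates on that compact set. Both shooting arguments (the paper's in $r$ at fixed $\overline{t}$, yours in $c$ at fixed $a$) deliver the same uniform $\delta$, so your proof is complete modulo the future-directedness remark above.
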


\begin{proof}
Given the future time-like geodesic $\gamma \colon I \rightarrow M$ and $t_0\in I$, it is known, e. g. by \cite[\S 97]{L} and \cite[def. 7.13]{P}, that there exists a synchronous coordinate system $\left(U,\phi=\left(t,x,y\right)\right)$ with $\gamma\left(t_0\right)\subset U$ in which the metric $\mathbf{g}$ of $M$ can be written as 
\[
\left(g_{ij}\right)= 
\left( 
\begin{array}{ccc}
-1 & 0 & 0 \\
0 & g_{11} & g_{12} \\
0 & g_{12} & g_{22}
\end{array}
\right)
\]
where $g_{ij}\equiv g_{ij}\left(t,x,y\right)$ for $i,j=1,2$, $U$ is causally convex and the expression of the geodesic $\gamma$ in these coordinates is $\phi\left(\gamma\left(s\right)\right)=\left(s,0,0\right)\in \mathbb{R}^3$. 
For a point $\gamma\left( \overline{t} \right)\in U$, it is possible to find $R>0$ such that the compact set 
\[
U_0=\left\{\left(t,x,y\right):x^2+y^2\leq R, t_0\leq t \leq \overline{t} \right\}
\]
is contained in $U$.

As candidates for the required twisted null curve, we will study curves $\mu_r$ such that
\[
\phi\left(\mu_r\left(s\right)\right)=\left(f_r\left(s\right),r\left(1-\cos s\right),r\sin s\right)
\]   
where $0\leq r \leq R/2$ and $f_r=f_r\left(s\right)$ is a function.
If $\mu_r$ is a null curve, then $\mathbf{g}\left(\mu'_r,\mu'_r\right)=0$ and therefore 
\[
-\left(f'_r\left(s\right)\right)^2 + r^2 g_{11}\sin^2 s + 2r^2 g_{12}\sin s \cos s + r^2 g_{22}\cos^2 s = 0
\]
where $g_{ij}=g_{ij}\left(\phi\left(\mu_r\left(s\right)\right)\right)$.
Thus, we have a first order ordinary differential equation which describes a null curve passing through $\gamma\left(t_0\right)$
\begin{equation}\label{ode-mu}
\left\{
\begin{tabular}{l}
$f'_r\left(s\right)=r\sqrt{ g_{11}\sin^2 s + 2 g_{12}\sin s \cos s +  g_{22}\cos^2 s }$ \\
$f_r\left(0\right)=t_0$
\end{tabular}
\right.
\end{equation}
Since the metric in the hypersurfaces $\left\{ t=c \right\}$ with $t_0\leq c \leq \overline{t}$ is positive definite, then the term under the square root in \ref{ode-mu} is always positive. 
Moreover, since $f'_r>0$ then $\mu_r$ is future.

Let us show that we can find $r>0$ such that $\mu_r$ is twisted. 
A simple calculation gives
\[
(d\phi)_{\mu_r(s)} \left(\frac{D\mu'_{r}}{ds}\left(s\right)\right)=\left( f''_r+r^2\varphi_0\left(r,s\right),r\cos s + r^2\varphi_1\left(r,s\right), -r\sin s+r^2\varphi_2\left(r,s\right)  \right)
\]
where $\varphi_i=\varphi_i\left(r,s\right)$ with $i=0,1,2$ are continuous functions in $U$ depending on the Christoffel symbols and the components of $\mu'_r$. 
In order to show that $\frac{D\mu'_r}{ds}$ and $\mu'_r$ are linearly independent, it is enough to see that the determinant 
of their components $x$, $y$ does not cancel out, so
\begin{equation*}
\left|
\begin{array}{lr}
r\cos s + r^2\varphi_1\left(r,s\right) & r\sin s  \\
-r\sin s+r^2\varphi_2\left(r,s\right) & r \cos s 
\end{array}
\right| 
= r^2\left( 1  + r \left(\varphi_1\left(r,s\right)\cos s + \varphi_2\left(r,s\right)\sin s \right)\right)
\end{equation*}
hence, since $\varphi_1$ and $\varphi_2$ are continuous in $U$, they are also bounded in the compact set $U_0$ and there exists $r_0\leq R/2$ such that 
\[
1  + r \left(\varphi_1\left(r,s\right)\cos s + \varphi_2\left(r,s\right)\sin s \right)\neq 0
\] 
for all $r\in \left(0,r_0\right]$, and in this case, $\frac{D\mu'_{r}}{ds}$ and $\mu'_{r}$ are linearly independent.
 
At this moment, we have seen that $\mu_r$ is a twisted null curve passing through $\gamma\left(t_0\right)$ for $0<r\leq r_0$, and it remains to show that there exists $\delta>0$ such that $\mu_r$ also passes through $\gamma\left( t \right)$ for every $t\in \left(t_0,t_0+\delta \right]$.

Now, we want to prove that for every $r\in \left(0,r_0\right]$ there exists $s_r>0$ such that $f_r\left(s_r\right)=\overline{t}$.
Given $r\in \left(0,r_0\right]$, we define $\omega_r=\mathrm{sup}\left\{s:f_r\left(s\right) \hspace{1mm} \mathrm{exists}\right\}$. 
Let us assume that $\displaystyle{\lim_{s\mapsto \omega_r}f_r\left(s\right)}=c\leq \overline{t}$. 
In case of $\omega_r<+\infty$, the solution $\overline{f}_r$ of equation \ref{ode-mu} verifying the initial condition $\overline{f}_r\left(\omega_r\right)=c$ would coincide with $f_r=f_r\left(s\right)$ for $s<\omega_r$ contradicting the maximality of $f_r$ up to $\omega_r$ because in that case $f_r$ could be extended beyond $s=\omega_r$. 
On the other hand, if $\omega_r=+\infty$, the derivability of $f_r$ would imply that $\displaystyle{\lim_{s\mapsto +\infty}f'_r\left(s\right)}=0$ and hence the curve solution $\mu_r$ would approximate to the curve $\beta_r$ verifying 
\[
\beta_r\left(s\right)=\left(c,r\left(1-\cos s\right),r \sin s\right)\in U_0
\]
in $TM$, i.e. for every $s_0\in \mathbb{R}$ the sequence $\left\{ s_n = s_0+2\pi n \right\}_{n\in\mathbb{N}}$ would verify 
\[
\lim_{s\mapsto +\infty}\mu_r\left(s_n\right)=\beta_r\left(s_0\right) \hspace{5mm} \mathrm{and} \hspace{5mm} \lim_{s\mapsto +\infty}\mu'_r\left(s_n\right)=\beta'_r\left(s_0\right)
\] 
By the continuity of the metric $\mathbf{g}$ then we have 
\[
\lim_{s\mapsto +\infty}\mathbf{g}\left(\mu'_r\left(s_n\right),\mu'_r\left(s_n\right)\right)=\mathbf{g}\left(\beta'_r\left(s_0\right),\beta'_r\left(s_0\right)\right)\neq 0
\]
since $\beta_r$ is contained in the space-like hypersurface $\left\{t=c\right\}$, but this contradicts that $\mathbf{g}\left(\mu'_r,\mu'_r\right)=0$.
Therefore, independently from $\omega_r$, for every $r\in\left(0,r_0\right]$ we have that $\displaystyle{\lim_{s\mapsto \omega_r}f_r\left(s\right)}> \overline{t}$ and hence, for every $r\in\left(0,r_0\right]$ there exists $s_r\in \left(0,\omega_r\right)$ such that $f_r\left(s_r\right)=\overline{t}$.

Since the functions $g_{ij}$ are continuous in $U$ for $i,j=1,2$, then their restrictions to the compact set $U_0$ reach their maximum, therefore there exists $M_{ij}>0$ such that $\left|g_{ij}\left(t,x,y\right)\right|\leq M_{ij}$ for $\left(t,x,y\right)\in U_0$.
Then,
\begin{eqnarray*}
0<f'_r\left(s\right)&=&r\sqrt{ g_{11}\sin^2 s + 2 g_{12}\sin s \cos s +  g_{22}\cos^2 s }\leq \\
 &\leq& r\sqrt{ \left|g_{11}\sin^2 s \right| + 2 \left|g_{12}\sin s \cos s \right| +  \left|g_{22}\cos^2 s\right| } \leq \\
 &\leq& r\sqrt{ M_{11} + 2 M_{12} +  M_{22} } = rM 
\end{eqnarray*}
where $M=\sqrt{ M_{11} + 2 M_{12} +  M_{22} }\in \mathbb{R}$ is independent from $r$ and $s$.
So integrating, we have that $t_0\leq f_r\left(s\right) \leq rMs + t_0$ and therefore
\[
\overline{t}=f_r\left(s_r\right) \leq rMs_r + t_0 \hspace{5mm} \Rightarrow \hspace{5mm} \frac{\overline{t} - t_0}{rM}\leq s_r
\]
then there exists $\rho\in \left(0,r_0\right]$ small enough such that $s_r\geq 2\pi$ for all $r\in \left(0,\rho\right]$ and hence the parameter $s$ of $f_r$ can be extended beyond $s=2\pi$. 
Since $f'_{\rho}\left(s\right)>0$ then $f_{\rho}\left( s\right)>t_0$ for all $s>0$, therefore there exists $\delta>0$ such that $f_{\rho}\left(2\pi \right)=t_0+\delta$. 
So, by the inequality $t_0\leq f_r\left(2\pi\right) \leq 2\pi rM + t_0$ we have that $\lim_{r\mapsto 0}f_r\left(2\pi\right)=t_0$ and for every $t\in \left(t_0,t_0+\delta\right]$ there exists $r\in\left(0, \rho\right]$ such that 
\[
\mu_r\left(0\right)=\left(t_0,0,0\right)=\phi\left(\gamma\left(t_0\right)\right)
\]
\[
\mu_r\left(2\pi\right)=\left(f_r\left(2\pi\right),0,0\right)=\left(t,0,0\right)=\phi\left(\gamma\left(t\right)\right)
\]
therefore we have shown that there exists $\delta>0$ such that for every $t\in \left(t_0,t_0+\delta\right]$ the points $\gamma\left(t_0\right)$ and $\gamma\left(t\right)$ can be connected by some future twisted null curve $\mu_r$.
Analogously, this construction can be done to obtain a future twisted null curve joining $\gamma\left(t\right)$ to $\gamma\left(t_0\right)$ for all $t\in \left[t_0-\delta,t_0\right)$.
\end{proof}

\begin{lemma}\label{step2} The statement of Lemma \ref{step1} is true in a $m$--dimensional spacetime $M$.
\end{lemma}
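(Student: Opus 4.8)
The plan is to reduce the $m$-dimensional statement to the already proved three-dimensional Lemma \ref{step1} by restricting attention to a suitable timelike $3$-dimensional submanifold containing the geodesic $\gamma$, and then checking that the three relevant notions — future-oriented, null, and twisted — are inherited by the inclusion.

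First I would produce the submanifold. Since $\gamma$ is a future time-like geodesic, exactly as in the proof of Lemma \ref{step1} there is a synchronous coordinate system $(U,\phi=(t,x^1,\dots,x^{m-1}))$ around $\gamma(t_0)$ in which $\mathbf{g}=-dt^2+\sum_{i,j=1}^{m-1}g_{ij}(t,x)\,dx^i dx^j$, with $U$ causally convex and $\phi(\gamma(s))=(s,0,\dots,0)$. Set $N=\{x^3=\cdots=x^{m-1}=0\}\cap U$. Then $N$ is an embedded $3$-submanifold containing the portion of $\gamma$ lying in $U$, and the induced metric on $N$ is $-dt^2+\sum_{i,j=1}^{2}g_{ij}(t,x^1,x^2,0,\dots,0)\,dx^i dx^j$; its spatial block is a leading principal submatrix of the positive definite matrix $(g_{ij})_{i,j=1}^{m-1}$, hence positive definite, so $N$ with the induced metric is a $3$-dimensional time-orientable spacetime, and $N\cap U$ is causally convex in $N$ because a causal curve of $N$ between points of $N\cap U$ is a causal curve of $M$ between points of $U$. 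Moreover $\gamma$ is a geodesic of $N$: by the Gauss formula $0=\nabla^{M}_{\gamma'}\gamma'=\nabla^{N}_{\gamma'}\gamma'+\mathrm{II}(\gamma',\gamma')$, and since $N$ is a non-degenerate submanifold the tangential and normal parts vanish separately, whence $\nabla^{N}_{\gamma'}\gamma'=0$.

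Next I would apply Lemma \ref{step1} to the future time-like geodesic $\gamma$ inside the $3$-dimensional spacetime $N$, obtaining $\delta>0$ such that for every $t\in(t_0,t_0+\delta]$ there is a future twisted null curve $\mu$ in $N$ from $\gamma(t_0)$ to $\gamma(t)$. It remains to see that $\mu$ is still a future twisted null curve viewed in $M$. Nullity is immediate, since $\mathbf{g}^{M}(\mu',\mu')=\mathbf{g}^{N}(\mu',\mu')=0$, and future-orientation is immediate once $N$ is time-oriented by the restriction of $T$ (equivalently by $\gamma'$). For twistedness, the Gauss formula again gives $\frac{D^{M}\mu'}{ds}=\frac{D^{N}\mu'}{ds}+\mathrm{II}(\mu',\mu')$; if $\frac{D^{M}\mu'}{ds}$ were proportional to $\mu'$, then, $\mu'$ being tangent to $N$, the normal component would force $\mathrm{II}(\mu',\mu')=0$ and the tangential component would make $\frac{D^{N}\mu'}{ds}$ proportional to $\mu'$, contradicting that $\mu$ is twisted in $N$. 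Hence $\mu'$ and $\frac{D^{M}\mu'}{ds}$ are linearly independent in $TM$, so $\mu$ is a future twisted null curve of $M$ joining $\gamma(t_0)$ to $\gamma(t)$, proving the lemma.

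I expect the only delicate points to be (a) confirming that $N$ with its induced metric really satisfies the hypotheses under which Lemma \ref{step1} was established, which reduces to the purely local, coordinate-level observations above since that proof is entirely local; and (b) the bookkeeping in the Gauss-formula step, which rests on $N$ being non-degenerate so that $TM|_{N}=TN\oplus(TN)^{\perp}$. No new estimates are needed beyond those already contained in Lemma \ref{step1}. Alternatively one could avoid the submanifold language and rerun the proof of Lemma \ref{step1} verbatim in the $m$-dimensional synchronous chart with the spiralling curves $\mu_r(s)=(f_r(s),\,r(1-\cos s),\,r\sin s,0,\dots,0)$ confined to the coordinate plane $\{x^3=\cdots=x^{m-1}=0\}$; the only change is that $\frac{D\mu_r'}{ds}$ may acquire components along $\partial_{x^a}$, $a\ge 3$, which is harmless because the linear independence of $\mu_r'$ and $\frac{D\mu_r'}{ds}$ already follows from the non-vanishing of the same $(x^1,x^2)$-minor computed there.
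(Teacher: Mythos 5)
Your proof is correct and follows essentially the same route as the paper: both reduce to Lemma \ref{step1} by restricting to the coordinate $3$-plane $N=\{x^3=\cdots=x^{m-1}=0\}$ of the synchronous chart, identify $\frac{D^N}{ds}$ with the tangential part of $\frac{D}{ds}$ (your Gauss-formula step is exactly the paper's appeal to O'Neill's Lemma 4.3), and deduce linear independence of $\mu'$ and $\frac{D\mu'}{ds}$ in $TM$ from that in $TN$. Your extra checks (positive definiteness of the induced spatial block, causal convexity of $N\cap U$) are sound and, if anything, make the reduction slightly more explicit than the paper's.
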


\begin{proof}
We can find a synchronous coordinate system $\left(U,\phi\right)$ with $\phi = \left(t,x_1, \ldots, x_{m-1}\right)$ (as done previously) such that the expression of the geodesic $\gamma$ in these coordinates is $\phi\left(\gamma\left(s\right)\right)=\left(s,0,\ldots,0\right)\in \mathbb{R}^m$, so this chart is adapted to $\gamma$.
Consider the restriction 
\[
V=\left\{\left(t,x_1,\ldots,x_{m-1}\right):x_i=0, i=3,\ldots,m-1 \right\} \subset \phi\left(U\right)
\]
then $N=\phi^{-1}\left(V\right)\subset M$ is a $3$--dimensional manifold embedded in $M$. 
Moreover, by \cite[Lemma 4.3]{On83} we have that Levi-Civita connection in $N$ coincides with the orthogonal projection over $N$ of the Levi-Civita connection in $M$, hence we have $\frac{D^{N}}{ds}=\mathrm{tan}\left( \frac{D}{ds} \right)$ where $\frac{D^{N}}{ds}$ and $\frac{D}{ds}$ denote the covariant derivatives in $N$ and $M$ respectively. 
So the geodesics in $M$ contained in $N$ are also geodesics in $N$ and the restriction $\left(N,\left.\phi\right|_{N}=\left(t,x_1,x_2\right)\right)$ of the synchronous coordinate system is still a synchronous coordinate system for $N$. 
Then, since $\gamma$ is a geodesic contained in $N$, by step \ref{step1}, there exists $\delta>0$ and a future twisted null curve $\mu\subset N$ such that $\mu$ joins $\gamma\left( t_0 \right)$ to $\gamma\left( t_0 + \delta \right)$. 
Since the metric in $N$ is the restriction of the metric in $M$, then $\mu$ as curve in $M$ is also null. 
Finally, since $\mu'$ and $\frac{D^N \mu'}{ds}=\mathrm{tan}\left(\frac{D\mu'}{ds}\right)$ are lineally independent in $T_{\mu\left(s\right)}N$ then is an immediate consequence that $\mu'$ and $\frac{D\mu'}{ds}$ are lineally independent in $T_{\mu\left(s\right)}M$. Therefore, we have shown that there exists $\delta>0$ and $\mu$ a future twisted null curve in $M$ joining $\gamma\left( t_0 \right)$ to $\gamma\left( t_0 + \delta \right)$.
\end{proof}
 
We can prove now as a direct consequence of the previous lemmas, Lemma \ref{step1} and \ref{step2}, the following:

\begin{proposition}\label{mu-dim-3}
Let $\gamma:I\rightarrow M$ be a future timelike geodesic. Then, for any $t_0,t_1\in I$, there exists a future piecewise twisted null curve $\mu$ joining $\gamma\left( t_0 \right)$ to $\gamma\left( t_1 \right)$.
\end{proposition}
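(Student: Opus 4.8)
The plan is to globalize the local reachability statements of Lemmas \ref{step1} and \ref{step2} along the geodesic $\gamma$ by a supremum argument. Relabelling if necessary, we may assume $t_0<t_1$, so that $[t_0,t_1]\subset I$; since a future-directed twisted null curve runs from the chronologically earlier of its endpoints to the later one, this normalization costs nothing. Consider
\[
\mathcal{R}=\{\,s\in[t_0,t_1]\ :\ \text{there is a future piecewise twisted null curve from }\gamma(t_0)\text{ to }\gamma(s)\,\}.
\]
By the forward part of Lemma \ref{step2} applied at the point $\gamma(t_0)$, the set $\mathcal{R}$ contains $(t_0,t_0+\delta]\cap[t_0,t_1]$ for some $\delta>0$; in particular it is nonempty, and we put $T=\sup\mathcal{R}\in(t_0,t_1]$.

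First I would verify that $T\in\mathcal{R}$. Choose $s_n\in\mathcal{R}$ with $s_n\uparrow T$, and for each $n$ a future piecewise twisted null curve $\mu_n$ — with finitely many pieces, in the sense of Definition \ref{definition1} — from $\gamma(t_0)$ to $\gamma(s_n)$. The backward part of Lemma \ref{step2} applied at $\gamma(T)$ produces $\delta'>0$ such that every $\gamma(t)$ with $t\in[T-\delta',T)$ is joined to $\gamma(T)$ by a future twisted null curve; fixing one $n$ with $s_n\in[T-\delta',T)$ and concatenating $\mu_n$ with this curve exhibits a future piecewise twisted null curve from $\gamma(t_0)$ to $\gamma(T)$, so $T\in\mathcal{R}$. (Only a single concatenation is performed, so finiteness of the number of pieces is preserved; and since Definition \ref{definition1} demands differentiability only on the open subintervals of the partition, no matching of derivatives at the junction point is required.) Next I would rule out $T<t_1$: if $T<t_1$, the forward part of Lemma \ref{step2} at $\gamma(T)$ gives $\delta''>0$ and, for each $t\in(T,T+\delta'']$, a future twisted null curve from $\gamma(T)$ to $\gamma(t)$; concatenating with the curve from $\gamma(t_0)$ to $\gamma(T)$ just constructed shows that $\min(T+\delta'',t_1)\in\mathcal{R}$, contradicting $T=\sup\mathcal{R}$. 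Hence $T=t_1\in\mathcal{R}$, which is exactly the claim.

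The genuine obstacle is that the reach radius $\delta$ supplied by Lemmas \ref{step1}--\ref{step2} depends on the base point on $\gamma$ and need not be bounded below along $[t_0,t_1]$, so one cannot simply subdivide $[t_0,t_1]$ into finitely many $\delta$-short pieces at the start; the supremum argument above — equivalently, $\mathcal{R}$ is nonempty, its supremum again lies in $\mathcal{R}$ by the backward half of Lemma \ref{step2}, and that supremum cannot be strictly smaller than $t_1$ by the forward half — is what bypasses any need for uniformity. The remaining points are routine: one must keep track of the time-orientation, so that the forward half of Lemma \ref{step2} is used to push the reachable set forward and the backward half to close it up at $T$, and one uses the elementary fact that a concatenation of finitely many future twisted null curves is again a future piecewise twisted null curve in the sense of Definition \ref{definition1}.
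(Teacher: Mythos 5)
Your proof is correct, but it globalizes the local lemmas by a different mechanism than the paper. The paper's proof is a finite-subcover argument: it covers $[t_0,t_1]$ by the two-sided intervals $I_t=[t-\delta_t,t+\delta_t]$ supplied by Lemma \ref{step2}, extracts a finite subcover arranged so that only consecutive intervals overlap, chooses an interlaced partition $t_0=a_1<b_1<\cdots<a_N=t_1$ with $a_i\in I_i$ and $b_i\in I_i\cap I_{i+1}$, and concatenates the resulting twisted null segments. Your supremum (``continuous induction'') argument on the reachable set $\mathcal{R}$ reaches the same conclusion while avoiding the somewhat fiddly normalization of the subcover, and you correctly isolate the point that no uniform lower bound on the reach radius $\delta$ is needed. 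One remark common to both arguments: the statement of Lemma \ref{step1} (hence of Lemma \ref{step2}) formally asserts only the forward half --- a twisted null curve from $\gamma(t_0)$ to $\gamma(t)$ for $t\in(t_0,t_0+\delta]$ --- whereas you invoke a ``backward part'' at $\gamma(T)$ to close up $\mathcal{R}$ at its supremum. This is legitimate: the last sentence of the proof of Lemma \ref{step1} explicitly establishes the analogous backward statement, and the paper's own proof of the proposition relies on it just as implicitly (its intervals $I_t$ are two-sided). Your parenthetical observations --- that a single concatenation preserves finiteness of the partition and that Definition \ref{definition1} requires differentiability only on the open subintervals, so no derivative matching is needed at junctions --- are exactly the points that make the gluing work in the paper's version as well.
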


\begin{proof}
By Lemma \ref{step2}, for all $t\in \left[t_0,t_1\right]$ there exists an open interval $I_t=\left[t-\delta_t,t+\delta_t\right]\subset \left[t_0,t_1\right]$ relative to $\left[t_0,t_1\right]$ such that $\gamma\left(t\right)$ can be joined to $\gamma\left(u\right)$ with $u\in I_t$ by means of a piecewise twisted null curve. 
By the compactness of $\left[t_0,t_1\right]$, we can extract a finite covering $\left\{ I_n \right\}_{n=1,\ldots,N}$ such that, with no lack of generality, verifies $I_i \cap I_k \neq \varnothing \Leftrightarrow k=i\pm 1$.
We can choose a partition 
\[
\left\{ t_0=a_1 < b_1 < \cdots < a_{N-1} < b_{N-1} <a_N=t_1 \right\}
\]
such that $a_i \in I_i$ and $b_i\in I_i \cap I_{i+1}$ and therefore there exists future twisted null curves joining $\gamma\left(a_i\right)$ to $\gamma\left(b_i\right)$ and $\gamma\left(b_i\right)$ to $\gamma\left(a_{i+1}\right)$ for $i=1,\ldots,N-1$. The union of these curves forms a future piecewise twisted null curve connecting $\gamma\left(t_0\right)$ to $\gamma\left(t_1\right)$. 
\end{proof}

Now we can proceed with the proof of Theorem \ref{mu-teorema}.

\begin{proof}   Theorem \ref{mu-teorema}:
Consider $p,q\in M$ such that $q\in I^+(p)$, then there exists a continuous future time-like curve $\lambda$ connecting $p$ and $q$. 
By compactness of $\lambda$ between $p$ and $q$, there exists a finite covering $\left\{W_k\right\}_{k=1,\ldots, K}$ of globally hyperbolic and causally convex open sets, then it is possible to built a continuous curve $\gamma$ joining $p$ and $q$ 
formed by segments $\gamma_k \subset W_k$ of future time-like geodesics with endpoints at $\lambda$.  So $\gamma$ becomes a future piecewise time-like geodesic 

By Prop. \ref{mu-dim-3}, the endpoints of the time-like geodesic segments $\gamma_k$ of $\gamma$ can be connected by a future piecewise twisted null curve $\mu_k$. 
Since $\gamma$ is continuous, we can glue together all $\mu_k$ to obtain another piecewise twisted null curve $\mu$ joining $p$ and $q$.
\end{proof}



\section{The smooth structure of the space of skies and the non-refocussing property}\label{non-refocussing}

\subsection{Regular sets}
The smooth structure on the space of skies will be obtained by selecting a family of neighbourhoods possessing the properties that will make obvious the construction of an atlas on $\Sigma$.  We will call such neighbourhoods \emph{regular neighbourhoods} and they refine the notion of regular set already introduced in \cite[Def. 3]{Ba14}.

Let $W\subset \Sigma$ be a non-empty set satisfying the conditions:
\begin{enumerate}
\item \label{reg-1}$\widehat{T}X \cap \widehat{T}Y = \varnothing$ for all $X\neq Y\in W$.
\item \label{reg-2}The union 
\[
\widehat{W}=\bigcup_{X\in W} \widehat{T}X \subset \widehat{T}\mathcal{N}
\]
is a regular $\left(3m-4\right)$--dimensional submanifold of $\widehat{T}\mathcal{N}$.

\item \label{reg-3} Let $\widehat{\mathcal{D}}$ be the distribution in $\widehat{W}$ whose leaves are $\widetilde{X}=\widehat{T}X$.  Then the space of leaves $\widetilde{W}=\left\{ \widetilde{X}:X\in W \right\}=\widehat{W}/\widehat{\mathcal{D}}$ is a differentiable quotient manifold.
\end{enumerate}

It is clear that in this case, $\widetilde{W}$ can be identified to $W$ via the bijective map
\begin{equation}\label{ident-Sigma-tilde}
\begin{tabular}{rlcr}
$\Theta:$ & $W$ & $\rightarrow$ & $\widetilde{W}$ \\
 & $X$ & $\mapsto$ & $\widetilde{X}$
\end{tabular}
\end{equation}
and hence $W$ inherits the quotient topology such that
\begin{center}
$U\subset W $ is open $\displaystyle{ \Leftrightarrow \widehat{U} = \bigcup_{X\in U} \widehat{T}X \subset \widehat{W} }$ is open,
\end{center}
and also a differentiable structure from $\widetilde{W}$. 
So, we will denote $W$ equipped with the previous structure as $W^{\left( \sim \right)} \simeq \widetilde{W}$.
 
\begin{enumerate}
\setcounter{enumi}{3}
\item \label{reg-4} For every $X_0\in W$ and every celestial curve $\Gamma:I_{\epsilon}\rightarrow \mathcal{N}$ such that $\Gamma'\left(0\right)\in\widehat{T}X_0$,
\begin{enumerate}
\item \label{reg-4a}there exists $0<\delta \in I_{\epsilon}$ such that $\Gamma':I_{\delta}\rightarrow \widehat{W}$ with $I_{\delta}=\left(-\delta,\delta\right)$. 
\item \label{reg-4b}the curve $\chi_{X_0}^{\Gamma}:I_{\delta}\rightarrow W^{\left( \sim \right)}$ defined in Lemma \ref{mu-lemma} is differentiable.
\end{enumerate}
\item \label{reg-5} Given $\widetilde{X}, \widetilde{Y} \in \widetilde{W}$, for any causal curve $\chi\colon [a,b] \to \Sigma$, joining $X$ and $Y$,  then $\chi\left(s\right)\in W $ for all $s\in \left[a,b\right]$.
\end{enumerate}

Now we are ready to state the next definition:

\begin{definition}
A not--empty subset $W\subset \Sigma$ is said to be a \emph{regular} subset, and denoted as $W\subset_ {\mathrm{reg}} \Sigma$, if it verifies conditions (\ref{reg-1}) to (\ref{reg-5}) above. 
\end{definition}

Observe that both the definition of regular subset and the differentiable structure of $W^{\left( \sim \right)} \simeq \widetilde{W}$ depend only on $\mathcal{N}$ and $\Sigma$.

\subsection{The topology of the space of skies and regular sets}
We will show next that the class of regular subsets is not empty. 

We will say that $V\subset M$ is an open normal set is $V$ is globally hyperbolic, causally convex, relatively compact, open set of $M$.   A classical theorem due to Whitehead guarantees the existence of convex normal neighbourhoods $V$ at any point $x \in M$, (see \cite[chapter 5]{On83} and \cite[theorem 2.1 and definition 3.22]{Mi08} for a treatment of this result in Lorentz manifolds).  Thus for a strongly causal space-time $M$ there exists a basis of neighbourhoods at any $p\in M$ formed by normal open sets.

\begin{proposition}\label{prop-regular}
Let $V\subset M$ be a normal open set, then $U=S\left(V\right) \subset_\mathrm{reg} \Sigma$ is regular. Moreover, $S \colon V\rightarrow U^{\left(\sim\right)}$ is a diffeomorphism.
\end{proposition}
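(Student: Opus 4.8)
The plan is to verify conditions (\ref{reg-1})--(\ref{reg-5}) of the definition of a regular set for $U = S(V)$, using in an essential way that $V$ is a normal open set (globally hyperbolic, causally convex, relatively compact, with convex normal charts). For condition (\ref{reg-1}), I would argue that if $X = S(x)$, $Y = S(y)$ with $x \ne y$ both in $V$ and some $J \in \widehat{T}_\gamma X \cap \widehat{T}_\gamma Y$, then the light ray $\gamma$ meets both $x$ and $y$, and the common celestial tangent vector $J$ is represented by a Jacobi field vanishing at $x$ (as a point of $\gamma$) and, up to the $\gamma'$-ambiguity, vanishing at $y$ too; this makes $x$ and $y$ conjugate along $\gamma$ inside $V$, contradicting that $V$ is covered by normal (convex) neighbourhoods, exactly as in the proof of Corollary \ref{corol-mu} via \cite[Prop. 10.10]{On83}. (One should note $S$ is already injective on $V$ because $M$ is assumed sky-separating, but the conjugate-point argument is what rules out a shared celestial direction.)

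Next I would establish the smooth-manifold statements (\ref{reg-2}) and (\ref{reg-3}) together with the claim that $S \colon V \to U^{(\sim)}$ is a diffeomorphism, since these are really the same fact looked at from two sides. The natural move is to use the Cauchy-surface description of the atlas of $\mathcal{N}$ from Section \ref{atlas}: since $V$ is globally hyperbolic pick a smooth spacelike Cauchy surface $C \subset V$, so that $\mathcal{U} = \sigma \circ i_C(\mathbb{PN}^+(C))$ is a chart domain in $\mathcal{N}$ and every light ray meeting $V$ hits $C$ exactly once. I would then exhibit $\widehat{W} = \bigcup_{x\in V} \widehat{T}S(x)$ as the image of an explicit parametrization built from $TV \times (\text{null directions})$: a point of $\widehat{W}$ is a pair $(\gamma, [J])$ with $\gamma \in S(x)$ and $J$ a Jacobi field along $\gamma$ with $J(0)=0$ at $x$; such $J$ is determined by $J'(0) \in T_xM$ modulo the line spanned by $\gamma'(0)$, so $\widehat{W}$ is parametrized by $(x, u, w)$ with $x \in V$, $u \in \mathbb{PN}^+_x$, and $w \in T_xM / \langle$representative of $u\rangle$ nonzero — a bundle over $V$ of fibre dimension $(m-2) + (m-2) = 2m-4$, giving $\dim \widehat{W} = m + (2m-4) = 3m-4$ as required. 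Smoothness and regularity of this embedding into $\widehat{T}\mathcal{N}$ follow from smooth dependence of geodesics and Jacobi fields on initial data (the exponential-map / Jacobi-flow machinery), and the leaves $\widetilde{X} = \widehat{T}S(x)$ are precisely the fibres over fixed $x$, so $\widehat{W}/\widehat{\mathcal{D}}$ is the base $V$ up to the identification $\Theta$; this simultaneously gives (\ref{reg-3}) and the diffeomorphism $S\colon V \to U^{(\sim)}$.

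For (\ref{reg-4}) I would invoke the $\mu$-Lemma (Lemma \ref{mu-lemma}) directly: given $X_0 = S(x_0)$, $x_0 \in V$, and a celestial curve $\Gamma$ with $\Gamma'(0) \in \widehat{T}X_0$, the dust $\mu = \mu^\Gamma_{X_0}$ is a piecewise twisted null curve with $\mu(0) = x_0$, continuous, hence stays in the open set $V$ for $s$ in some $I_\delta$; on that interval $\Gamma'(s) \in \widehat{T}S(\mu(s)) \subset \widehat{W}$, giving (\ref{reg-4a}), and $\chi^\Gamma_{X_0}(s) = S(\mu(s))$, which is differentiable as a curve into $U^{(\sim)}$ because, under the diffeomorphism of the previous paragraph, it corresponds to $\mu$, a (piecewise) differentiable curve in $V$ — giving (\ref{reg-4b}). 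Finally (\ref{reg-5}) uses causal convexity of $V$: if $\chi\colon[a,b]\to\Sigma$ is a causal curve with endpoints in $U = S(V)$, then by Proposition~\ref{mu-dim-3}-type arguments together with the $\mu$-Lemma its associated dust traces out a causal (twisted null) curve in $M$ between two points of $V$, so by causal convexity it remains in $V$, whence $\chi(s) = S(\mu(s)) \in U$ for all $s$.

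The main obstacle I expect is the regularity/embeddedness in step two: proving that the parametrized family $\widehat{W}$ sits inside $\widehat{T}\mathcal{N}$ as an honest $(3m-4)$-dimensional \emph{regular} (embedded) submanifold, and that the leaf space is genuinely a manifold — this is where relative compactness and causal convexity of $V$ do real work, preventing distinct light rays through distinct points of $V$ from having tangent spaces that accumulate on one another or overlap (the failure of exactly this is what "refocusing" would cause globally). The rest is either a direct appeal to the $\mu$-Lemma or a conjugate-point argument already rehearsed in the proof of Corollary \ref{corol-mu}.
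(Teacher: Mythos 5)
Your overall structure matches the paper's: conditions (\ref{reg-1}), (\ref{reg-4}) and (\ref{reg-5}) are handled essentially as the paper handles them (normality plus causal convexity of $V$ ruling out conjugate points along a light ray segment inside $V$ for (\ref{reg-1}), the $\mu$-Lemma for (\ref{reg-4}), and causal convexity applied to the associated curve in $M$ for (\ref{reg-5})). The difference is in (\ref{reg-2}), (\ref{reg-3}) and the diffeomorphism claim: the paper does not prove these at all but cites \cite[Thm.~1]{Ba14} and \cite[Thm.~2]{Ba14}, whereas you attempt a direct construction. There is a genuine gap there, and you flag it yourself: writing down a smooth parametrization of $\widehat{W}$ by initial data $(x,u,w)$ and counting dimensions does not show that $\widehat{W}$ is a \emph{regular} (embedded) submanifold of $\widehat{T}\mathcal{N}$, nor that the leaf space $\widehat{W}/\widehat{\mathcal{D}}$ is a quotient manifold. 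One must rule out self-intersections and accumulation of the sets $\widehat{T}S(x)$ for distinct $x\in V$, and that is precisely the content of the cited theorems of \cite{Ba14}; it is also where normality and relative compactness do their real work. As written, your argument for (\ref{reg-2})--(\ref{reg-3}) is an outline with the decisive analytic step left open; to close it you must either reproduce the proofs of \cite[Thms.~1--2]{Ba14} or cite them, as the paper does.

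Two smaller points. First, a celestial vector at $\gamma\in S(x)$ is represented by a Jacobi field with $J(0)=0$ and $\langle J,\gamma'\rangle=0$, so the fibre over a null direction $u$ is $(u^{\perp}/\langle u\rangle)\setminus\{0\}$, of dimension $m-2$, not $T_xM/\langle u\rangle$, which has dimension $m-1$; your final count $3m-4$ is correct only after imposing this orthogonality constraint. Second, for (\ref{reg-4b}) the curve that must be differentiable is $\chi^{\Gamma}_{X_0}=S\circ\mu$ with $\mu$ the globally differentiable curve of Proposition \ref{prop-existence-mu}, not the piecewise dust $\mu^{\Gamma}_{X_0}$; the paper avoids this issue by factoring $\chi^{\Gamma}_{X_0}$ through $\Gamma'$ and the quotient projection onto $\widetilde{U}$. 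Neither point is fatal, but the embeddedness and leaf-space step is the one that genuinely needs an argument rather than a sketch.
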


\begin{proof}
Let $V\subset M$ be a normal open set, then condition (\ref{reg-1}) is verified since $V$ is causally convex. 
By \cite[Thm. 1]{Ba14}, condition (\ref{reg-2}) is verified. 
The condition (\ref{reg-3}) and the fact of $S:V\rightarrow U^{\left(\sim\right)}$ being a diffeomorphism are consequences of \cite[Thm. 2]{Ba14}. 
Lemma \ref{mu-lemma} trivially implies (\ref{reg-4a}) and permits to construct the curve $\chi_{X_0}^{\Gamma}$ as the following composition of differentiable maps
\begin{equation*}
\begin{tabular}{ccccccc}
& $\Gamma$ & & $\pi$ & & $\Theta^{-1}$ & \\
$I_{\delta}$ & $\longrightarrow$ & $\widehat{U}$ & $\longrightarrow$ & $\widetilde{U}$ & $\longrightarrow$ & $U^{\left(\sim\right)}$  \\
$s$ & $\mapsto$ & $\Gamma'\left(s\right)$ & $\mapsto$ & $\widehat{T}\chi_{X_0}^{\Gamma}\left(s\right)$ & $\mapsto$ & $\chi_{X_0}^{\Gamma}\left(s\right)$
\end{tabular}
\end{equation*}
then (\ref{reg-4b}) is verified.
Finally, in order to verify (\ref{reg-5}), we know that $\Gamma'\left(a\right)\in \widehat{T}X$, $\Gamma'\left(b\right)\in \widehat{T}Y$ and $X,Y\in U$, by lemma \ref{mu-lemma}, there exists a piecewise twisted null curve $\mu:\left[a,b\right]\rightarrow M$ such that $\mu\left(a\right)=x\in V$ and $\mu\left(b\right)=y\in V$. 
Since $V$ is causally convex, then $\mu$ is fully contained in $V$ and therefore $\chi=S\circ \mu$ is fully contained in $U=S\left(V\right)$.
So, we conclude that $U \subset_\mathrm{reg} \Sigma$.
\end{proof}

We may call the regular sets $U = S(V)$ with $V$ open normal, elementary regular sets in $\Sigma$.

Using now the technical lemma:

\begin{lemma}\label{W-lemma}
Given $W \subset_\mathrm{reg} \Sigma$ a regular set and $X_0 = S\left(x_0\right)\in W$, then for any twisted null curve $\mu:I_{\epsilon}\rightarrow M$ such that $\mu\left(0\right)=x_0$ there exists $\delta>0$ verifying that $\mu\left(\left(-\delta,\delta\right)\right)\subset S^{-1}\left(W\right)$.
\end{lemma}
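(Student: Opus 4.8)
The plan is to argue by a connectedness ("open--closed") argument on the parameter interval, exactly in the spirit of the uniqueness argument in Corollary \ref{corol-mu}. First I would set up the objects: given the twisted null curve $\mu\colon I_\epsilon\to M$ with $\mu(0)=x_0$ and $S(x_0)=X_0\in W$, form the associated celestial curve $\Gamma^\mu\colon I_\epsilon\to\mathcal{N}$ via the null-geodesic variation $\mathbf{x}(s,t)=\exp_{\mu(s)}(t\mu'(s))$, as in the converse part of the $\mu$-Lemma (Lemma \ref{mu-lemma}); this is legitimate because $\mu$ is twisted. By construction $\Gamma^\mu$ is celestial, $(\Gamma^\mu)'(0)\in\widehat{T}X_0$, and the dust of $\Gamma^\mu$ through $X_0$ is (a reparametrization of) $\mu$ itself, so $\chi_{X_0}^{\Gamma^\mu}(s)=S(\mu(s))$. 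The claim to be proved is then that $\chi_{X_0}^{\Gamma^\mu}(s)\in W$ for $s$ in some neighbourhood of $0$, equivalently $\mu(s)\in S^{-1}(W)$ there.

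Next I would use regularity of $W$ directly. Condition (\ref{reg-4a}) applied to $X_0$ and the celestial curve $\Gamma^\mu$ gives a $\delta_1>0$ with $(\Gamma^\mu)'\colon I_{\delta_1}\to\widehat{W}$, i.e. the velocity of $\Gamma^\mu$ stays in the submanifold $\widehat{W}$; in particular, for each $s\in I_{\delta_1}$ the vector $(\Gamma^\mu)'(s)\in\widehat{T}\Gamma^\mu(s)$ lies in $\widehat{T}X_s$ for some $X_s\in W$. Now the key point: because $(\Gamma^\mu)'(s)$ is a celestial vector, it is tangent to the sky $S(\mu(s))$, and by property (\ref{reg-1}) the fibres $\widehat{T}X$ over $W$ are pairwise disjoint, so the sky $X_s\in W$ containing $(\Gamma^\mu)'(s)$ is forced to equal $S(\mu(s))$. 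Hence $S(\mu(s))\in W$ for all $s\in I_{\delta_1}$, which is precisely $\mu((-\delta_1,\delta_1))\subset S^{-1}(W)$ with $\delta:=\delta_1$.

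The one genuine subtlety — and the step I would treat with most care — is the assertion that the unique sky through $\gamma_s:=\Gamma^\mu(s)$ to which $(\Gamma^\mu)'(s)$ is tangent is $S(\mu(s))$ and not some other sky that happens also to lie in $W$. Here one must know that a celestial vector at $\gamma_s$ determines the point of $M$ whose sky it is tangent to: this is exactly the content underlying Proposition \ref{prop-existence-mu} and Corollary \ref{corol-mu}, where the uniqueness of the dust is established via the no-conjugate-points-in-a-normal-neighbourhood argument (\cite[Prop. 10.10]{On83}). So the logical skeleton is: $(\Gamma^\mu)'(s)\in\widehat{T}\,S(\mu(s))$ by construction of $\Gamma^\mu$; $(\Gamma^\mu)'(s)\in\widehat{W}=\bigcup_{X\in W}\widehat{T}X$ by (\ref{reg-4a}); the decomposition of $\widehat{W}$ into the leaves $\widehat{T}X$ is a partition by (\ref{reg-1}); therefore $S(\mu(s))$ is the (unique) leaf through $(\Gamma^\mu)'(s)$ and lies in $W$. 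Everything else is the routine verification that $\Gamma^\mu$ is well defined and celestial, which is already packaged in Lemma \ref{mu-lemma}, so no open--closed argument is in fact needed — the neighbourhood $I_{\delta_1}$ is handed to us by condition (\ref{reg-4a}) and the conclusion is immediate on it.
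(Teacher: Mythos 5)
Your first step coincides with the paper's: feed the twisted null curve $\mu$ into the converse part of the $\mu$-Lemma (Lemma \ref{mu-lemma}) to obtain the celestial curve $\Gamma^{\mu}$ with $(\Gamma^{\mu})'(0)\in\widehat{T}X_0$ and $\chi_{X_0}^{\Gamma^{\mu}}=S\circ\mu$. The divergence, and the gap, is in how you extract the conclusion from regularity. You invoke condition (\ref{reg-4a}) to place $(\Gamma^{\mu})'(s)$ in $\widehat{W}$ and then argue via condition (\ref{reg-1}) that the leaf of $\widehat{W}$ containing it must be $\widehat{T}\,S(\mu(s))$, so that $S(\mu(s))\in W$. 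But condition (\ref{reg-1}) only asserts disjointness of $\widehat{T}X$ and $\widehat{T}Y$ for $X\neq Y$ \emph{both in} $W$. To use it to identify $X_s$ with $S(\mu(s))$ you would first need to know that $S(\mu(s))\in W$ --- which is exactly what you are trying to prove. The fallback you offer, that a celestial vector determines the point of $M$ whose sky it is tangent to, is false in general: a Jacobi class vanishing $(\mathrm{mod}\ \gamma')$ at two mutually conjugate points along $\gamma_s$ is tangent to both of their skies, and nothing in conditions (\ref{reg-1})--(\ref{reg-4a}) prevents $W$ from containing the sky of a point conjugate to $\mu(s)$ while omitting $S(\mu(s))$ itself. (If celestial vectors determined unique skies, condition (\ref{reg-1}) would be vacuous; and Corollary \ref{corol-mu} establishes uniqueness of the dust curve through a \emph{prescribed} initial point, not uniqueness of the sky tangent to a given vector.)

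The repair is immediate and is what the paper does: condition (\ref{reg-4b}) states directly that $\chi_{X_0}^{\Gamma^{\mu}}\colon I_{\delta}\to W^{(\sim)}$ is differentiable, in particular that it takes values in $W$, so $\mu\bigl((-\delta,\delta)\bigr)=S^{-1}\circ\chi_{X_0}^{\Gamma^{\mu}}\bigl((-\delta,\delta)\bigr)\subset S^{-1}(W)$. You should cite (\ref{reg-4b}) rather than attempt to derive its content from (\ref{reg-4a}) and (\ref{reg-1}).
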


\begin{proof}
Consider $X_0 = S\left(x_0\right)\in W\subset_\mathrm{reg} \Sigma$, then by lemma \ref{mu-lemma}, there exists a celestial curve $\Gamma:I_{\epsilon}\rightarrow \mathcal{N}$ and a continuous curve $\chi_{X_0}^{\Gamma}:I_{\epsilon}\rightarrow \Sigma$ such that $\chi_{X_0}^{\Gamma}=S\circ \mu$.
Since $W$ is regular, then there exists $\delta>0$ such that $\chi_{X_0}^{\Gamma}:\left(-\delta,\delta\right)\subset I_{\epsilon}\rightarrow W^{\left(\sim\right)}$ is differentiable. 
Then we have 
\[
\mu\left(\left(-\delta,\delta\right)\right)=S^{-1}\circ \chi_{X_0}^{\Gamma} \left(\left(-\delta,\delta\right)\right)\subset S^{-1}\left(W^{\left(\sim\right)}\right)=S^{-1}\left(W\right).
\]
\end{proof}

\begin{figure}[b]
\centering
\includegraphics[width=9cm]{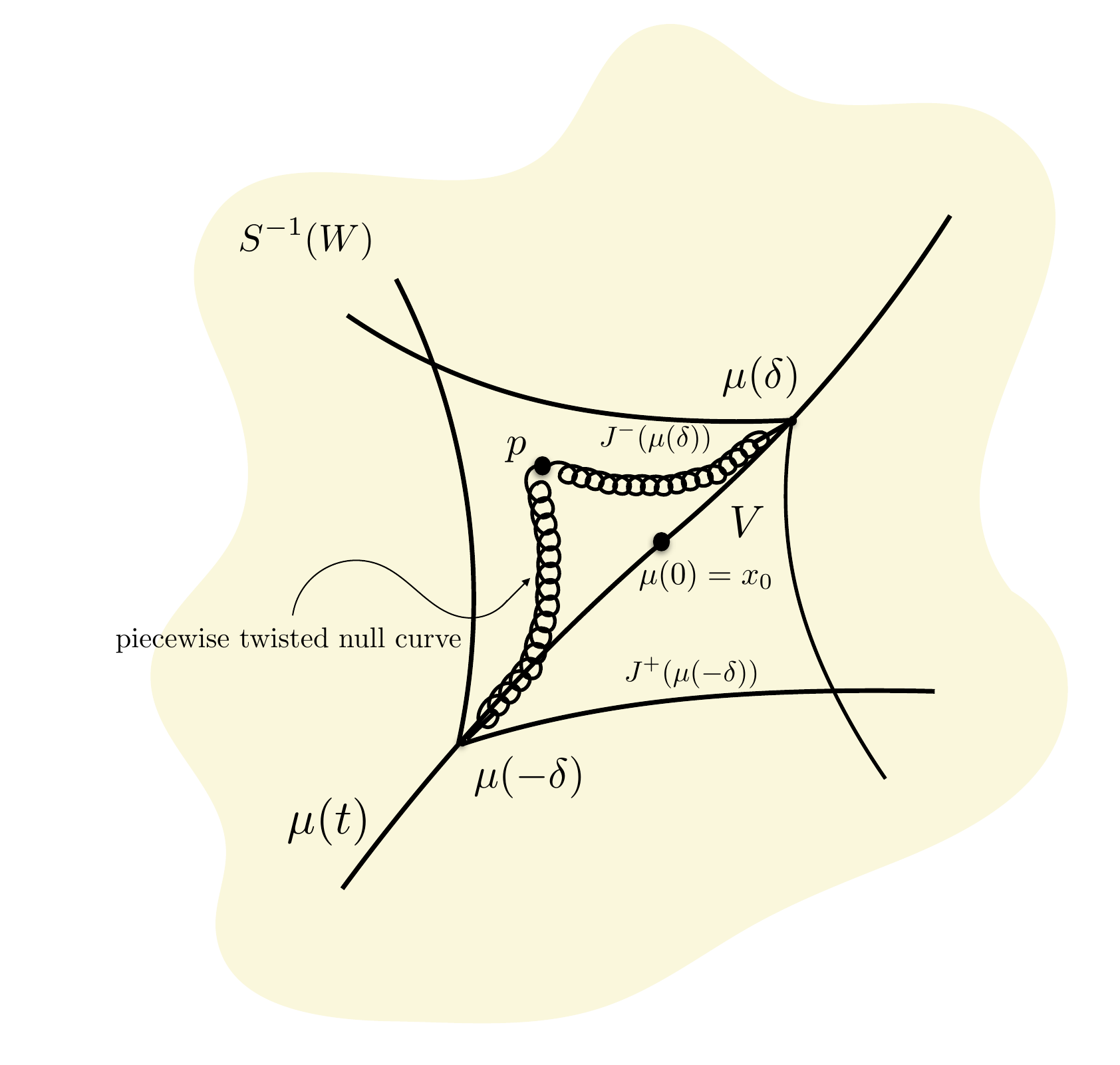}
\caption{}\label{fig:null_curves}
\end{figure}

It is easy to prove the following:
 
\begin{theorem}\label{theo-regular}
Let $W\subset_\mathrm{reg} \Sigma$ be a regular set, then $S^{-1}\left(W\right)$ is open in $M$.
\end{theorem}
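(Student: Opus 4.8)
The plan is to show that $S^{-1}(W)$ is open in $M$ by exhibiting, around each point $x_0 \in S^{-1}(W)$, a whole normal neighbourhood contained in $S^{-1}(W)$. First I would fix $x_0 \in S^{-1}(W)$, set $X_0 = S(x_0) \in W$, and choose (by Whitehead's theorem, as recalled before Prop. \ref{prop-regular}) an open normal set $V \subset M$ with $x_0 \in V$. The key device is Lemma \ref{W-lemma}: for any twisted null curve $\mu$ through $x_0$ there is $\delta > 0$ with $\mu((-\delta,\delta)) \subset S^{-1}(W)$. So it suffices to reach every point of a small normal neighbourhood of $x_0$ from $x_0$ by a twisted null curve (or a short concatenation of them) that stays inside $V$, and then invoke Lemma \ref{W-lemma} together with condition (\ref{reg-5}) in the definition of regular set to conclude that the whole such curve lies in $S^{-1}(W)$.

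The main step is a local surjectivity statement: I would argue that there is a smaller normal neighbourhood $V' \subset V$ of $x_0$ such that every $y \in V'$ is joined to $x_0$ by a future or past piecewise twisted null curve contained in $V$. For points $y$ with $y \in I^+(x_0)$ or $x_0 \in I^+(y)$ this is exactly the Twisted Null Curve Theorem, Thm. \ref{mu-teorema}, applied inside the globally hyperbolic causally convex set $V$ (so the curve stays in $V$ by causal convexity). For $y$ spacelike-separated from $x_0$ one needs a short "zig-zag": go from $x_0$ slightly to the future to some $z \in I^+(x_0) \cap I^+(y)$ close by, then come back down to $y$; choosing $V'$ small enough — using strong causality and the openness of $I^\pm$ — guarantees such a $z$ exists within $V$, so $y$ is reached by a piecewise twisted null curve in $V$. (Alternatively one may absorb this into a single application of Lemma \ref{step2}/Prop. \ref{mu-dim-3} along an auxiliary timelike segment.)

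Having produced, for each $y \in V'$, a piecewise twisted null curve $\mu \colon [0,1] \to V$ with $\mu(0) = x_0$, $\mu(1) = y$, I would push it down to $\Sigma$ as $\chi = S \circ \mu$, which by Lemma \ref{mu-lemma} is (the image of) a celestial curve's dust and in particular a causal curve in $\Sigma$ between $X_0 \in W$ and $S(y)$; condition (\ref{reg-5}) then forces $\chi(s) \in W$ for all $s$, hence in particular $S(y) \in W$, i.e. $y \in S^{-1}(W)$. Since this holds for all $y \in V'$, we get $V' \subset S^{-1}(W)$, and as $x_0$ was arbitrary, $S^{-1}(W)$ is open.

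The hard part will be the spacelike case of the local surjectivity step: one must be careful that the zig-zag timelike detour and the two applications of Thm. \ref{mu-teorema} can be carried out entirely within the fixed normal set $V$ and that the radius $V'$ can be chosen uniformly in $y$ — this is where strong causality, causal convexity of $V$, and the openness of the chronological relation are all used. Everything else is routine: gluing piecewise twisted null curves (already done in the proof of Thm. \ref{mu-teorema}), and the final descent to $\Sigma$ via Lemma \ref{mu-lemma} and property (\ref{reg-5}).
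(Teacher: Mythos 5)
Your overall strategy (reach nearby points by piecewise twisted null curves and then use property (\ref{reg-5}) to push the conclusion $S(y)\in W$) is the right idea, but the way you invoke (\ref{reg-5}) has a genuine gap. Condition (\ref{reg-5}) only applies to a causal curve \emph{both} of whose endpoints already have skies in $W$: it says that if $\widetilde X,\widetilde Y\in\widetilde W$ then any causal curve in $\Sigma$ joining $X$ and $Y$ stays in $W$. Your curves run from $x_0$ (whose sky is in $W$) to the target point $y$ (whose sky is precisely what you are trying to show lies in $W$), so (\ref{reg-5}) cannot be applied to them; the argument is circular at that step. Lemma \ref{W-lemma} does not rescue this, since it only guarantees that a \emph{short initial segment} of a twisted null curve through $x_0$ stays in $S^{-1}(W)$, with no control on how far it persists.

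The paper's proof avoids this by a sandwich trick: take a single future twisted null curve $\mu$ through $x_0$, use Lemma \ref{W-lemma} to get $\delta>0$ with $\mu((-\delta,\delta))\subset S^{-1}(W)$, and set $V=I^{+}\left(\mu(-\delta)\right)\cap I^{-}\left(\mu(\delta)\right)$, which is an open neighbourhood of $x_0$ (a twisted null curve is causal but not a null geodesic, so its endpoints are chronologically related). Now \emph{both} $S(\mu(-\delta))$ and $S(\mu(\delta))$ are known to lie in $W$, and for any $p\in V$ Theorem \ref{mu-teorema} produces a future piecewise twisted null curve from $\mu(-\delta)$ to $\mu(\delta)$ passing through $p$; property (\ref{reg-5}) then legitimately forces $S(p)\in W$. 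This also disposes of your ``spacelike zig-zag'' case entirely: a curve that goes to the future and then back to the past is not causal in the sense required by (\ref{reg-5}) (the definitions demand a single time orientation throughout), whereas the chronological diamond $V$ covers all nearby points, spacelike-separated from $x_0$ or not, with a genuinely future-directed curve. To repair your proof you should replace ``join $x_0$ to $y$'' by ``thread $y$ on a future causal curve between two points of $S^{-1}(W)$ that chronologically sandwich $x_0$.''
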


\begin{proof}
Given $W\subset_\mathrm{reg} \Sigma$ and consider $X_0\in W$ such that $x_0=S^{-1}\left(X_0\right)\in M$.
Take a future twisted null curve $\mu:I_{\epsilon}\rightarrow M$ with $\mu\left(0\right)=x_0$ , then by lemma \ref{W-lemma}, there exists $\delta>0$ verifying that $\mu\left(\left(-\delta,\delta\right)\right)\subset S^{-1}\left(W\right)$. 
Without any lack of generality, we can assume that $\delta$ is small enough for $V=I^{+}\left(\mu\left(-\delta\right)\right)\cap I^{-}\left(\mu\left(\delta\right)\right)$ being globally hyperbolic and causally convex. 
Observe that $x_0\in V$ and for any $p\in V$, we have that $p\in I^+\left(\mu(-\delta)\right)$, then by theorem \ref{mu-teorema}, for any $p\in V$ there exists a future piecewise twisted null curve $\mu_p$ connecting $\mu\left(-\delta\right)$ and $\mu\left(\delta\right)$ passing through $p$  (see Figure \ref{fig:null_curves}).
Now, since $W$ is regular, then by property (\ref{reg-5}), the curve $\chi_p = S\circ \mu_p$ is fully contained in $W$, therefore $p\in S^{-1}\left(W\right)$ and hence $V\subset  S^{-1}\left(W\right)$ and $S^{-1}\left(W\right)$ is open in $M$.
\end{proof}

In virtue of Proposition \ref{prop-regular} and Theorem \ref{theo-regular}, since the sky map $S$ is an homeomorphism with the Low's topology in $\Sigma$, it is clear that this topology coincides with the topology generated by regular sets in $\Sigma$. 
So, by \cite[Cor. 1, Thm. 2 and Cor. 2]{Ba14}, we get the following corollary.

\begin{corollary}\label{differ}
The family of regular sets $\left\{ W \mid W \subset_\mathrm{reg}\Sigma \right\}$ is a basis for the Low's topology of $\Sigma$. 
Moreover, there exists a unique differentiable structure in $\Sigma$ compatible with the manifolds $W^{\left(\sim\right)}\subset \Sigma$ that makes of $S:M\rightarrow \Sigma$ a diffeomorphism.
\end{corollary}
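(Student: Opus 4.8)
The plan is to obtain the corollary from Proposition \ref{prop-regular} and Theorem \ref{theo-regular}, the only substantive step being that the sky map is open for Low's topology; everything else is formal bookkeeping combined with the local results of \cite{Ba14}. Under the standing hypotheses $M$ is strongly causal, free of naked singularities and sky--separating, so $S\colon M\to\Sigma$ is a continuous bijection, and, by Whitehead's theorem, the open normal sets $V\subset M$ (globally hyperbolic, causally convex and relatively compact) form a basis of the manifold topology of $M$.

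First I would establish that $S$ is a homeomorphism onto $\Sigma$ with Low's topology. Continuity being known, only openness is at stake: given an open $O\subset M$ and $x\in O$, choose an open normal $V$ with $x\in V\subset O$; by Proposition \ref{prop-regular}, $U:=S(V)$ is a regular subset and $S\colon V\to U^{(\sim)}$ is a diffeomorphism. It then suffices that every such elementary regular set $U$ be open in Low's topology, for then $S(O)$, a union of such $U$'s, is Low--open; the claim for $U$ is to be extracted from Theorem \ref{theo-regular} together with the local arguments of \cite{Ba14}, and I postpone it as the crux. Granting it, $S$ is a homeomorphism onto $\Sigma$ equipped with Low's topology.

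With $S$ a homeomorphism, both assertions follow quickly. If $W\subset_{\mathrm{reg}}\Sigma$, then $S^{-1}(W)$ is open in $M$ by Theorem \ref{theo-regular}, so $W=S\bigl(S^{-1}(W)\bigr)$ is Low--open; conversely, if $O'\subset\Sigma$ is Low--open and $X_0\in O'$, then $S^{-1}(O')$ is open in $M$, so there is an open normal $V$ with $S^{-1}(X_0)\in V\subset S^{-1}(O')$, and hence $X_0\in S(V)\subset O'$ with $S(V)$ regular; thus the regular sets form a basis for Low's topology. For the differentiable structure, each elementary regular set carries the quotient manifold structure $U^{(\sim)}$ with $S\colon V\to U^{(\sim)}$ a diffeomorphism, and on an overlap $S(V_1)\cap S(V_2)=S(V_1\cap V_2)$ the change of charts $\bigl(S|_{V_2}\bigr)^{-1}\circ S|_{V_1}$ is the identity of $V_1\cap V_2\subset M$, hence smooth; so the $U^{(\sim)}$ form a smooth atlas on $\Sigma$, compatible with the manifolds $W^{(\sim)}$ of all regular sets by \cite[Cor.~1, Thm.~2 and Cor.~2]{Ba14}. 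This structure makes $S$ a diffeomorphism, since locally $S$ is the diffeomorphism $V\to U^{(\sim)}$; and it is the unique one compatible with the $W^{(\sim)}$, because for any two such structures $\mathrm{id}_\Sigma=S\circ S^{-1}$ would be a diffeomorphism between them.

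The hard part is the postponed claim: that an elementary regular set $S(V)$ is open in Low's topology, equivalently that $S$ is an open map, without invoking non--refocussing. Concretely one must rule out a sky $S(z)$, with $z$ outside the normal set $V$, being squeezed into arbitrarily small neighbourhoods of $S(x_0)$ in $\mathcal{N}$ for some $x_0\in V$; the reason this cannot be settled by a naive compactness argument -- and the reason Low introduced non--refocussing in the first place -- is that such a $z$ need not remain in any compact region of $M$. What makes it work is the regular--set formalism of Section \ref{non-refocussing}: property (\ref{reg-5}) propagates membership of a sky in a regular set along families of future piecewise twisted null curves joining causally related events, and Theorem \ref{theo-regular}, built on Lemma \ref{W-lemma}, the Twisted Null Curve Theorem \ref{mu-teorema} and Lemmas \ref{step1}--\ref{step2}, converts this into openness of $S^{-1}(W)$; combined with Proposition \ref{prop-regular} this delivers the required openness of the elementary sets. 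All remaining steps are routine.
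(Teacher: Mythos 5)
Your overall route is the same as the paper's: combine Proposition \ref{prop-regular} and Theorem \ref{theo-regular} with the local results of \cite{Ba14}, reduce everything to openness of the sky map for Low's topology, and then do the bookkeeping. The bookkeeping is done correctly and in more detail than the paper's own one-paragraph justification: the direction ``every Low-open set is a union of elementary regular sets'' via Low-continuity of $S$, the atlas $\{S\colon V\to U^{(\sim)}\}$ with identity transition maps, and the uniqueness argument are all fine.

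The problem is the step you yourself isolate as the crux and then do not actually carry out. Theorem \ref{theo-regular} says $S^{-1}(W)$ is open in $M$ for $W$ regular, and Proposition \ref{prop-regular} says $S(V)$ is regular for $V$ normal; together these make $S$ a homeomorphism of $M$ onto $\Sigma$ equipped with the topology $\tau_R$ generated by regular sets, and (since $S$ is Low-continuous) show that Low's topology is \emph{coarser} than $\tau_R$. They do not give the reverse inclusion, which is exactly what ``regular sets are Low-open'' means: to show $S(V)$ is Low-open at $X_0=S(x_0)$ you must exhibit an open $\mathcal{U}\subset\mathcal{N}$ with $\Sigma(\mathcal{U})\subseteq S(V)$, i.e., exclude a point $z\notin V$ all of whose light rays meet an arbitrarily small neighbourhood of $x_0$ --- precisely the refocussing configuration you describe. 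Property (\ref{reg-5}) cannot be invoked against such a $z$, because it only constrains causal curves \emph{both} of whose endpoints already lie in $W$, and nothing in Lemma \ref{W-lemma} or Theorem \ref{mu-teorema} forces $S(z)$ into $W$. So the sentence ``combined with Proposition \ref{prop-regular} this delivers the required openness of the elementary sets'' is a non sequitur as written. In fairness, the paper's own proof of Corollary \ref{differ} is exactly this terse at the same point (it simply asserts ``in virtue of'' the two results that $S$ is a homeomorphism for Low's topology), so you have reproduced the published argument, weakest link included; but since your proposal explicitly promises to close this gap and then does not, it cannot be counted as a complete proof.
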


In the previous construction of the topology of $\Sigma$ by mean of regular sets, the hypothesis of non--refocusing in $M$ has not been used, but we have obtained, with no further hypotheses, that the resultant topology coincides with Low's topology in $\Sigma$.


The following Lemma corroborates the relation between neighbourhood basis of $M$ and its space of skies $\Sigma$ and will be used to establish the conclusion that sky-separating implies non-refocussing.

\begin{lemma}\label{neigh_basis}
Let $\mathcal{B}\left(x\right)$ be a neighbourhood basis consisting on globally hyperbolic, normal and causally convex open sets.
For any $U\in \mathcal{B}\left(x\right)$, denote by $\mathcal{U}=\left\{ \gamma\in \mathcal{N}:\gamma\cap U\neq \varnothing \right\}$. Then $\left\{\Sigma\left( \mathcal{U} \right):U\in \mathcal{B}\left(x\right)\right\} $ is a neighbourhood basis of $S\left(x\right)\in \Sigma$. 
\end{lemma}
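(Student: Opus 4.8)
The plan is to show two things: first, that each $\Sigma(\mathcal{U})$ with $U \in \mathcal{B}(x)$ is indeed an open neighbourhood of $S(x)$ in Low's topology; and second, that every Low-open neighbourhood of $S(x)$ contains one of the sets $\Sigma(\mathcal{U})$. The first part is essentially immediate: if $U$ is globally hyperbolic, normal and causally convex, then $\mathcal{U} = \{\gamma \in \mathcal{N} : \gamma \cap U \neq \varnothing\}$ is open in $\mathcal{N}$ (a light ray meeting the open set $U$ does so in an open segment, and nearby light rays in any of the charts of Section \ref{atlas} still meet $U$), so $\Sigma(\mathcal{U})$ is a basic open set of the reconstructive topology by definition; moreover $S(x) \subset \mathcal{U}$ since every light ray through $x$ meets $U$, so $S(x) \in \Sigma(\mathcal{U})$.

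The substantive direction is the second. Let $\mathcal{O} \subset \Sigma$ be any Low-open set with $S(x) \in \mathcal{O}$. By Corollary \ref{differ} the regular sets form a basis for Low's topology, so there is a regular $W \subset_{\mathrm{reg}} \Sigma$ with $S(x) \in W \subset \mathcal{O}$. By Theorem \ref{theo-regular}, $S^{-1}(W)$ is open in $M$ and contains $x$; since $\mathcal{B}(x)$ is a neighbourhood basis of $x$ in $M$, there is $U \in \mathcal{B}(x)$ with $x \in U \subset S^{-1}(W)$, hence $S(U) \subset W$. It then suffices to prove that $\Sigma(\mathcal{U}) \subset S(U)$, i.e. that every sky entirely contained in $\mathcal{U}$ is the sky of a point of $U$. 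So let $Y = S(y) \in \Sigma(\mathcal{U})$, meaning every light ray through $y$ meets $U$. I would argue that $y \in U$: pick a point $y_0 \in U$, pick a light ray $\gamma$ through $y$ which also meets $U$ — in fact, using that $U$ is normal one can take a light ray from $y$ through an interior point — and use causal convexity of $U$ together with the fact that if $y \notin U$ then some null direction at $y$ generates a light ray meeting $U$ only after leaving and re-entering, contradicting either causal convexity or the normal (conjugate-point-free) structure; the cleanest route is: since every null geodesic through $y$ enters $U$ and $U$ is causally convex and globally hyperbolic, the full double cone structure forces $y$ into $U$ — concretely, taking future- and past-directed null geodesics from $y$ that both meet $U$ and invoking causal convexity on the causal curve passing through $U$ and through $y$.

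The main obstacle I anticipate is exactly this last inclusion $\Sigma(\mathcal{U}) \subset S(U)$: it is the non-trivial geometric content and is precisely where the interplay between "every light ray through $y$ meets $U$" and the causal-convexity/normality of $U$ must be used carefully — this is the local analogue of the non-refocussing obstruction. One must rule out the scenario where $y$ lies outside $U$ yet all its skies thread through $U$; here relative compactness and global hyperbolicity of $U$ (part of being a normal open set) together with causal convexity should close the gap, possibly by considering a limiting light ray and the compactness of $\mathbb{PN}^+_y$ as in the proof of Lemma \ref{mu-lemma}. Once that inclusion is established, the chain $S(x) \in \Sigma(\mathcal{U}) \subset S(U) \subset W \subset \mathcal{O}$ completes the proof that $\{\Sigma(\mathcal{U}) : U \in \mathcal{B}(x)\}$ is a neighbourhood basis of $S(x)$.
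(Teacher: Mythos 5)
Your first half (each $\Sigma(\mathcal{U})$ is a basic Low-open set containing $S(x)$) is fine. The second half contains a genuine gap: the inclusion $\Sigma(\mathcal{U})\subset S(U)$ that your argument hinges on is false in general. That inclusion says precisely that if every light ray through $y$ meets $U$ then $S(y)$ is the sky of a point of $U$ --- which is essentially the (local) non-refocussing property, and non-refocussing is \emph{not} a hypothesis of the lemma. Indeed, the whole point of Theorem \ref{convergence}, where this lemma is applied, is the refocussing situation: there one produces points $x_n\notin W$ all of whose light rays meet $V_n^x\subset W$, so that $S(x_n)\in\Sigma(\mathcal{U}_n)$ while $x_n\notin V_n^x$; your inclusion would make that argument impossible. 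Your heuristic for proving it also does not work: causal convexity only controls causal curves with \emph{both} endpoints in $U$, and a point $y\notin U$ need not lie on the segment of any of its light rays that is trapped in $U$, so no contradiction arises from $y$ lying outside $U$ while all its null geodesics thread through $U$.

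The correct route, and the one the paper takes, avoids identifying which skies belong to $\Sigma(\mathcal{U})$ altogether. Since the basic Low-open neighbourhoods of $S(x)$ are the sets $\Sigma(\mathcal{W})$ with $\mathcal{W}\subset\mathcal{N}$ open and $S(x)\subset\mathcal{W}$, and since $\mathcal{U}\subset\mathcal{W}$ trivially implies $\Sigma(\mathcal{U})\subset\Sigma(\mathcal{W})$, it suffices to find $U\in\mathcal{B}(x)$ with $\mathcal{U}\subset\mathcal{W}$. This is a tube-lemma argument: trivialize $\mathbb{PN}^+$ over a neighbourhood $V$ of $x$, use the continuous map $\overline{\sigma}\colon V\times\mathbb{S}^{m-2}\to\mathcal{N}$ with $\overline{\sigma}(\{x\}\times\mathbb{S}^{m-2})=S(x)$, cover the compact fibre $\{x\}\times\mathbb{S}^{m-2}\subset\overline{\sigma}^{-1}(\mathcal{W})$ by finitely many product boxes $K_j\times H_j$, intersect the $K_j$, and pick $U\in\mathcal{B}(x)$ inside that intersection; then $U\times\mathbb{S}^{m-2}\subset\overline{\sigma}^{-1}(\mathcal{W})$, i.e.\ $\mathcal{U}=\overline{\sigma}(U\times\mathbb{S}^{m-2})\subset\mathcal{W}$. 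Note also that your detour through Corollary \ref{differ} and Theorem \ref{theo-regular} imports machinery (regular sets, openness of $S^{-1}(W)$) that the lemma does not need and that sits uncomfortably close to the statements this lemma is meant to help establish; the compactness argument above uses only the continuity of $\overline{\sigma}$ and the compactness of the celestial sphere.
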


\begin{proof}  Because the bundle $\mathbb{PN}\left(M\right) \to M$ is locally trivial,
let us take a neighbourhood $V\subset M$ of $x\in M$ such that there is a diffeomorphism $\varphi:V\times \mathbb{S}^{m-2}\rightarrow\mathbb{PN}\left(V\right)$ with $\varphi\left(\left\{y\right\}\times \mathbb{S}^{m-2}\right)=\mathbb{PN}_y$ for all $y\in V$. 

Consider the map $\sigma:\mathbb{PN}\left(V\right)\rightarrow\mathcal{V}\subset\mathcal{N}$ defined by $\sigma\left(\left[v\right]\right)=\gamma_{\left[v\right]}$. 
It is clear that $\sigma$ is continuous and hence $\overline{\sigma}=\sigma\circ \varphi:V\times \mathbb{S}^{m-2}\rightarrow\mathcal{V}$ is also so.
Observe that  
\[
S\left(x\right)=\overline{\sigma} \left(\left\{x\right\}\times \mathbb{S}^{m-2}\right) \, ,
\]
and $\overline{\sigma}(V \times \mathbb{S}^{m-2}) = \mathcal{V}$.

Now, take any open $\mathcal{W} \subset \mathcal{V}$ containing the sky $S\left(x\right)$, then 
\[
\left\{x\right\}\times \mathbb{S}^{m-2} \subset\overline{\sigma}^{-1} \left(S\left(x\right)\right)\subset  \overline{\sigma}^{-1} \left(\mathcal{W}\right)
\]
Since $\overline{\sigma}$ is continuous then $\overline{\sigma}^{-1} \left(\mathcal{W}\right)$ is open in $V\times\mathbb{S}^{m-2}$. 

For any $\left(y,q\right)\in V\times\mathbb{S}^{m-2}$ there exists a neighbourhood basis whose elements are $U^{\left(y,q\right)}=K^y \times H^q$ where $K^y\subset V$ and $H^q\subset \mathbb{S}^{m-2}$ are open neighbourhoods of $y\in V$ and $q\in \mathbb{S}^{m-2}$ respectively. 
Then for any $(x,q) \in \{ x \} \times \mathbb{S}^{m-2}$, there exist $U^{(y,q)}$ with $(x,q) \in U^{(y,q)} \subset \overline{\sigma}^{-1}(W)$.   
Since $\left\{x\right\}\times\mathbb{S}^{m-2}$ is compact, then there exists a finite sub-covering $\left\{ U_j = K_j\times H_j\right\}_{j=1,\ldots,n}\subset \overline{\sigma}^{-1} \left(\mathcal{W}\right)$.
Then  
\[
\left\{x\right\}\times\mathbb{S}^{m-2} \subset \bigcup_{j=1}^{n}U_j \subset \overline{\sigma}^{-1} \left(\mathcal{W}\right)
\]
Observe that $K_0=\bigcap_{j=1}^{n}K_j$ is an open neighbourhood of $x$ and $\bigcup_{j=1}^{n}H_j = \mathbb{S}^{m-2}$.

Since $\mathcal{B}\left(x \right)$ is a neighbourhood basis of $x\in M$, there exists $U\in \mathcal{B}\left(x \right)$ such that $U\subset K_0$. 

For any $\left(y,q\right)\in U\times\mathbb{S}^{m-2}$, we have that 
\[
\left(y,q\right)\in  U\times \bigcup_{j=1}^{n}H_j
\]
therefore there exists $j$ such that $q\in H_j$ and since $y\in K_0\subset K_j$, then $\left(y,q\right)\in U_j \subset \overline{\sigma}^{-1}(W)$. 
This implies that 
\[
\left\{x\right\}\times\mathbb{S}^{m-2} \subset  U\times\mathbb{S}^{m-2} \subset \overline{\sigma}^{-1} \left(\mathcal{W}\right) .
\]
and hence
\[
S\left(x\right)\subset \overline{\sigma}\left( U\times \mathbb{S}^{m-2} \right) \subset  \mathcal{W}
\] 
and since $\mathcal{U} = \overline{\sigma}\left( U\times \mathbb{S}^{m-2} \right)$ then 
\[
S\left(x\right) \in \Sigma\left( \mathcal{U}  \right) \subset \Sigma\left( \mathcal{W} \right)
\]
is verified.  
Then $\left\{ \Sigma\left(\mathcal{U}\right):U\in \mathcal{B}\left(x\right) \right\} $ is a neighbourhood basis of $S\left(x\right)\in \Sigma$ as we claimed.
\end{proof}

A direct consequence of the previous results is the following:

\begin{theorem}\label{convergence}  Let $M$ be a space-time separating skies such that it is refocussing at $x$, then the sky map $S\colon M \to \Sigma$ is not open.
\end{theorem}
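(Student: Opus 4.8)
The plan is to argue by contradiction: assume the sky map $S$ is open and produce a point that violates the refocussing hypothesis at $x$. Since $M$ separates skies, $S$ is a bijection $M \to \Sigma$, and the hypothesis ``$S$ open'' says precisely that $S$ carries open sets of $M$ to open sets of $\Sigma$ in the reconstructive topology. Applying this to the open neighbourhood $U_0$ of $x$ furnished by the definition of refocussing, the set $S(U_0)$ is then an open subset of $\Sigma$ containing $S(x)$.

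The next step is to extract a manageable basic neighbourhood of $S(x)$ sitting inside $S(U_0)$. Fix a neighbourhood basis $\mathcal{B}(x)$ of $x$ consisting of globally hyperbolic, normal and causally convex open sets, and for $U\in\mathcal{B}(x)$ write $\mathcal{U}=\{\gamma\in\mathcal{N}:\gamma\cap U\neq\varnothing\}$. By Lemma \ref{neigh_basis} the family $\{\Sigma(\mathcal{U}):U\in\mathcal{B}(x)\}$ is a neighbourhood basis of $S(x)$ in $\Sigma$, so there is $U\in\mathcal{B}(x)$ with $S(x)\in\Sigma(\mathcal{U})\subset S(U_0)$. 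Now I would feed the open neighbourhood $V:=U\cap U_0\subset U_0$ of $x$ into the refocussing hypothesis: there exists $y\notin U_0$ such that every light ray through $y$ meets $V$, i.e. $S(y)\subset\mathcal{V}$ with $\mathcal{V}=\{\gamma:\gamma\cap V\neq\varnothing\}$. Since $V\subset U$ one has $\mathcal{V}\subset\mathcal{U}$, hence $S(y)\subset\mathcal{U}$, that is $S(y)\in\Sigma(\mathcal{U})\subset S(U_0)$. Therefore $S(y)=S(z)$ for some $z\in U_0$, and because $M$ separates skies this forces $y=z\in U_0$, contradicting $y\notin U_0$. Hence $S$ cannot be open.

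I do not expect a genuine obstacle in the main argument: once Lemma \ref{neigh_basis} is in hand the reasoning is purely set-theoretic, and the only point needing care is the bookkeeping between the two descriptions of neighbourhoods of $S(x)$ — the sets $\Sigma(\mathcal{W})$ with $\mathcal{W}$ open in $\mathcal{N}$, which by definition generate the reconstructive topology, versus the subfamily $\Sigma(\mathcal{U})$ with $\mathcal{U}=\{\gamma:\gamma\cap U\neq\varnothing\}$, $U\in\mathcal{B}(x)$, which the lemma singles out as a neighbourhood basis. The substantive content is really that Lemma \ref{neigh_basis} converts the ``every null geodesic through $y$ enters $V$'' phrasing of refocussing into the clean statement $S(y)\in\Sigma(\mathcal{V})$, after which the trivial inclusion $\mathcal{V}\subset\mathcal{U}$ (from $V\subset U$) together with the injectivity of $S$ closes the loop.
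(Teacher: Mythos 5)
Your proof is correct and follows essentially the same route as the paper: both arguments combine the refocussing hypothesis with Lemma \ref{neigh_basis} to produce a point $y$ outside the refocussing neighbourhood whose sky nevertheless lies in an arbitrarily small basic neighbourhood $\Sigma(\mathcal{U})$ of $S(x)$, and then use sky-separation to reach a contradiction. The only difference is presentational: the paper phrases this as a sequence $x_n \not\to x$ with $S(x_n)\to S(x)$ (violating continuity of the parachute map), whereas you run the same mechanism once against the single open set $S(U_0)$, which is slightly more economical since it avoids constructing the nested family $V_n^x$ with $\bigcap_n V_n^x=\{x\}$.
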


\begin{proof}  We will show that there exists a sequence $\{Êx_n\}$ in $M$ that does not converge to $x$ and such that $S(x_n)$ converges to $S(x)$ in $\Sigma$ does contradicting the statement that $S$ is open.

Because $M$ is refocussing at $x$ there exists an open neighbourhood $W \subset M$ of $x$ such that for every open neighbourhood $V \subset W$ of $x$ there is $y \notin W$ such that every light ray passing through $y$ enters $V$.  Let us choose a sequence of globally hyperbolic neighbourhoods $V_n^xÊ\subset W$ of $x$ such that $\cap_n V_n^x = \{Êx\}$.  More specifically, let $\sigma(t)$ be a time-like curve contained on a causally convex, globally hyperbolic neighbourhood $U\subset W$ of $x$ and let $a_n$ (respect. $b_n$) be a sequence of points on $\sigma$, in the past (future) of $x$, such that $a_n \to x$ (respect. $b_n \to x$).   Now we choose the sequence of open neighbourhoods as $V_n^x = I^+(a_n) \cap I^-(b_n)$. 

Then for any $V_n^x$ in the previous sequence there exists $x_n \notin W$ such that $\gamma \cap V_n^x \neq \emptyset$ and $x_n \in \gamma \in \mathcal{N}$.    Hence, since $x_n \notin W$ for all $n$, then $x_n$ cannot converge to $x$.  

On the other hand, considering the open subsets $\mathcal{U}_n = \{ \gamma \in \mathcal{N} \mid \gamma \cap V_n^x \neq \emptyset \}$, and because of Lemma \ref{neigh_basis}, it is clear that $\Sigma(\mathcal{U}_n)$ define a neighbourhood basis at $S(x)$ in $\Sigma$, and because $S(x_n) \in \Sigma(\mathcal{U}_n)$ then we conclude that $S(x_n) \to S(x)$.
\end{proof}

Then we get as a corollary of Thm. \ref{convergence}:

\begin{corollary}
If the skies of $M$ separate events then $M$ is non--refocusing.
\end{corollary}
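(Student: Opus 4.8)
The plan is to read this statement as the contrapositive of Theorem~\ref{convergence}, combined with the fact that a sky-separating space-time has an open sky map. The argument should therefore be purely formal, with no new construction needed.

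In detail, I would proceed as follows. First recall that ``non-refocussing'' means, by definition, that $M$ is not refocussing at \emph{any} point $x\in M$; hence it suffices to derive a contradiction from the assumption that there is a single point $x$ at which $M$ refocusses. Next, under the standing hypothesis that skies separate events, invoke Corollary~\ref{differ} (which rests on Proposition~\ref{prop-regular}, Lemma~\ref{W-lemma} and Theorem~\ref{theo-regular}): it equips $\Sigma$ with a differentiable structure making $S\colon M\to\Sigma$ a diffeomorphism for Low's topology, so in particular $S$ is an open map. Finally, apply Theorem~\ref{convergence} at the putative refocussing point $x$: it produces a sequence $\{x_n\}\subset M$ with $x_n\not\to x$ but $S(x_n)\to S(x)$ in $\Sigma$, which exhibits $S$ as a non-open map. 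This contradicts the previous sentence, so no refocussing point can exist and $M$ is non-refocussing.

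I do not expect a genuine obstacle here, since all the substance is already contained in Theorem~\ref{convergence} (whose proof hinges on the neighbourhood-basis description of Lemma~\ref{neigh_basis}) and in the chain of results establishing that regular sets form a basis of Low's topology compatible with a diffeomorphic sky map. The only care needed is bookkeeping: making explicit that ``$M$ non-refocussing'' quantifies over all base points, and that the sky-separation hypothesis is precisely what licenses the appeal to Corollary~\ref{differ} for the openness of $S$. One could alternatively bypass Corollary~\ref{differ} and argue openness of $S$ directly from Proposition~\ref{prop-regular} (images of normal open sets are regular subsets of $\Sigma$) together with the fact that regular sets are open in Low's topology, since normal open sets form a basis for the manifold topology of $M$; either route yields the corollary in one line once Theorem~\ref{convergence} is available.
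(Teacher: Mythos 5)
Your proposal is correct and follows exactly the route the paper intends: the corollary is stated as an immediate consequence of Theorem~\ref{convergence} together with the openness of the sky map established via Theorem~\ref{theo-regular} and Corollary~\ref{differ}, argued by contradiction at any putative refocussing point. No further comment is needed.
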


\section{Conclusions and discussion}
We have reached the main conclusion that the topological, differentiable and causal structures of sky-separating strongly causal space-times can be reconstructed from the corresponding ones in their spaces of light rays and skies.     It is also important to point out that because of  Lemma \ref{neigh_basis} any strongly causal space-time is locally sky-separating, thus the property of being sky-separating has a global character.

The possibility of describing the causal structure of a space-time in terms of the partial order induced in the space of skies by non-negative Legendrian isotopies in the space of light rays, provides a new interpretation to the Malament-Hawking theorem, \cite{Ma77}, \cite{Ha76}, in the sense that the partial order relation defined on the space of skies  characterise the conformal structure of the original space-time.  Actually, suppose that $\Phi \colon \mathcal{N}_1 \to \mathcal{N}_2$ is a sky preserving diffeomorphism between the spaces of light rays of two strongly causal sky-separating space-times $M_1$ and $M_2$.   If the map $\Phi$ preserves the partial orders $\prec_a$, $a= 1,2$ defined in the spaces of skies $\Sigma_1$ and $\Sigma$, i.e., if $X\prec_1 Y$ then $\Phi(X) \prec_2 \Phi(Y)$, for any $X,Y \in \Sigma_1$, then because of Cor. \ref{order_iso}, we have that $\Phi$ induces a causal diffeomorphism $\varphi\colon M_1 \to M_2$, hence a conformal diffeomorphism.

The characterisation of causal relations in terms of sky isotopies opens a new direction in the foundations of the causal sets programme to quantum gravity \cite{Br91}, \cite{Ri00}, as it shows that causal structures need for their description the additional structure provided by the contact structure in the space of light rays.

It is also worth to point out here that the causal completion of a given spacetime is just continuous and often fails to be smooth (as in the case of Minkowski) space.   According to the reconstruction theorems discussed in this  paper a similar analysis could be performed directly on the space of light rays and skies. In this setting a concrete proposal of a new causal boundary construction was proposed by R. Low \cite{Lo06} but has not been discussed in detail so far.

A particularly interesting situation happens for three dimensional space-times that will be discussed in a forthcoming paper.  In such case the space of light rays happens to be three dimensional again as well as the space of skies.    In such case Low's causal boundary can be constructed explicitly and their topology can then be compared with that of the original space-time.


\begin{thebibliography}{DC}

\bibitem[Ab88]{Ab88} R. Abraham, J. Marsden, T. Ratiu, \emph{Manifolds,
tensor analysis, and applications}. Springer-Verlag, 1988.

\bibitem[Ba14]{Ba14} A. Bautista, A. Ibort, J. Lafuente. \emph{On the space of light rays of a spacetime
and a reconstruction theorem by Low}. Class. Quantum Grav. 31 (2014) 075020.

\bibitem[Be96]{BE96} J.K. Beem, P.E. Ehrlich, K.L.Easley. \emph{Global
Lorentzian Geometry}. Marcel Dekker, New York, 1996.

\bibitem[Be11]{NM11} J. J. Benavides Navarro, E. Minguzzi, \emph{Global hyperbolicity is stable in the interval topology}, J. Math. Phys. \textbf{52}, 112504 (2011).

\bibitem[Br91]{Br91} G. Brightwell, R. Gregory, \emph{Structure of random discrete spacetime}, Phys. Rev. Lett. \textbf{66}, 260--263 (1991)

\bibitem[Dc92]{DC92} M.P. Do Carmo. \emph{Riemannian Geometry}. Birkh\"{a}%
user, Boston, 1992.

\bibitem[Ch08]{Ch08} V. Chernov, Yu. Rudyak, Linking and causality in
globally hyperbolic space-times, Comm. Math. Phys. 279 (2008), 309-354.

\bibitem[Ch10]{Ch10} V. Chernov, S. Nemirovski. \emph{Legendrian Links,
Causality, and the Low Conjecture}. Geom. Funct. Analysis, \textbf{19} (5)
1320-1333 (2010).

\bibitem[Ch14]{Ch14}  V. Chernov, S. Nemirovski.  \emph{Universal orderability of Legendrian isotopy classes}.
(2014) ArXiv:1307.5694v3 [math.SG].

\bibitem[Ha94]{Ha94} S.G. Harris. \emph{The method of timelike 2--surfaces}.
Contempt. Math., \textbf{170}, 125--34 (1994).

\bibitem[Ha01]{Ha01} S.G. Harris, R.J. Low. \emph{Causal monotonicity,
omniscient foliations and the shape of space}. Class. Quantum Grav., \textbf{%
18} 27--43 (2001).

\bibitem[Ha76]{Ha76}  S. W. Hawking, A. R. King, P .J. McCarthy, \emph{A new topology for curved space-time which incorporates the causal, dif- ferential, and conformal structures}. J. Math. Phys. \textbf{17}, 174 (1976).

\bibitem[He73]{HE73} S.W. Hawking, G.F.R. Ellis. \emph{The large scale
structure of space-time}. Cambridge University Press, Cambridge, 1973.

\bibitem[Ki11]{K} P. A. Kinlaw. \emph{Refocusing of light rays in space-time}. J. Math. Phys. 52 (2011) 052505.

\bibitem[La03]{L} L. D. Landau, E. M. Lifshitz. \emph{The Classical Theory of Fields. Vol.2 (4th ed.)}. Butterworth-Heinemann, 2003.

\bibitem[Lo88]{Lo88} R. J. Low, Causal relations and spaces of null
geodesics, PhD Thesis, Oxford University (1988).

\bibitem[Lo89]{Lo89} R. J. Low, The geometry of the space of null geodesics,
J. Math. Phys. 30(4) (1989), 809-811.

\bibitem[Lo90]{Lo90} R. J. Low, Twistor linking and causal relations,
Classical Quantum Gravity 7 (1990), 177-187.

\bibitem[Lo93]{Lo93} R. J. Low. \emph{Celestial Spheres, Light Cones and Cuts}. J. Math. Phys. 34 (1993), no. 1, 315-319.

\bibitem[Lo94]{Lo94} R. J. Low, Twistor linking and causal relations in
exterior Schwarzschild space, Classical Quantum Gravity 11 (1994), 453-456.

\bibitem[Lo98]{Lo98} R. J. Low, Stable singularities of wave-fronts in
general relativity, J. Math. Phys. 39 (1998), 3332-3335.

\bibitem[Lo01]{Lo01} R. J. Low. \emph{The space of null geodesics}. Proc. Third World Congress of Nonlinear
Analysts, Part 5 (Catania, 2000). Nonlinear Anal. 47, 2001, no. 5, 3005--3017.

\bibitem[Lo06]{Lo06} R. J. Low, The space of null geodesics (and a new
causal boundary), Lecture Notes in Physics 692, Springer, Berlin Heidelberg
New York, 2006, pp. 35-50.

\bibitem[Ma77]{Ma77}  D. B. Malament.  \emph{The class of continuous timeline curves determines the topology
of spacetime}. J. Math. Phys.,  \textbf{18}, 1399-1404 (1976).

\bibitem[Mi08]{Mi08} E. Minguzzi, M. S\'{a}nchez, \emph{The causal hierarchy
of spacetimes}, Zurich: Eur. Math. Soc. Publ. House, vol. H. Baum, D. Alekseevsky (eds.), Recent
developments in pseudo-Riemannian geometry of ESI Lect. Math. Phys.,
pages 299--358 (2008). arXiv:gr-qc/0609119.

\bibitem[On83]{On83} B. O'Neill, \emph{Semi-Riemannian geometry with
applications to Relativity}. Academic Press. New York, 1983.

\bibitem[Pe72]{P} R. Penrose. \emph{Techniques of Differential Topology in Relativity}. Regional Conference Series in Applied Mathematics, SIAM, 1972.

\bibitem[Ri00]{Ri00}  D.P. Rideout, R.D. Sorkin; \emph{A classical sequential growth dynamics for causal sets}, Phys. Rev D, \textbf{6}, 024002 (2000).


\end{thebibliography}
\end{document}